\newcommand{\diff}{\mathrm{d}}
\newtheorem{Proposition}{Proposition}
\newtheorem{Lemma}{Lemma}
\newcommand{\half}{\mbox{$\textstyle \frac{1}{2}$}}
\newcommand{\re}{\mbox{$\rm e$}}
\newcommand{\mcF}{{\mathcal F}}
\newcommand{\mcC}{{\mathcal C}}
\newcommand{\mbP}{{\mathbb P}}
\newcommand{\mbE}{{\mathbb E}}
\newcommand{\mbR}{{\mathbb R}}
\newcommand{\given} {{\,| \,}}
\begin{document}

\title{Information-Based Trading }
\author{George Bouzianis,$^1$ Lane~P.~Hughston,$^1$ and Leandro S\'anchez-Betancourt${}^{2,3}$}

\affiliation{
$^1$Department of Computing, Goldsmiths University of London\\ New Cross, London SE14\,6NW, United Kingdom\\
$^2$Mathematical Institute, University of Oxford\\ Woodstock Road, Oxford OX2\,6GG, United Kingdom\\
$^3$Oxford-Man Institute of Quantitative Finance\\
Walton Well Road, Oxford OX2\,6ED, United Kingdom
}

\begin{abstract}
\noindent 
We consider a pair of traders in a market where the information available to the second trader is a strict subset of the information available to the first trader. The traders make prices based on information concerning a security that pays a random cash flow at a fixed time $T$ in the future. Market information is modelled in line with the scheme of Brody, Hughston \& Macrina (2007, 2008, 2011) and Brody, Davis, Friedman \& Hughston (2009). The risk-neutral distribution of the cash flow is known to the traders, who make prices with a fixed multiplicative bid-offer spread and report their prices to a game master who declares that a trade has been made when the bid price of one of the traders crosses the offer price of the other. We prove that the value of the first trader's position is strictly greater than that of the second. The results are analyzed by use of simulation studies and generalized to situations where (a) there is a hierarchy of traders, (b) there are multiple successive trades, and (c) there is inventory aversion. 
\vspace{-0.2cm}
\\
\begin{center}
{\scriptsize {\bf Key words: Information-based asset pricing, information processes, trading models, informed traders, Brownian bridge,
nonlinear filtering, signal to noise ratio, bid-offer spread, inventory aversion.
} }
\end{center}
\end{abstract}

\maketitle
%
\section{Introduction}
\label{sec: introduction}
\noindent We step outside the standard framework for arbitrage-free pricing and consider a situation that conforms more closely to that of reality, namely that where traders with access to superior information dominate those who are less well informed. The principle that the more well informed should be better placed to trade successfully is intuitively satisfactory and generally accepted, but there remains the problem of embodying this principle in the context of a specific set of models, and within this context giving a precise mathematical characterization of the mechanics of statistical arbitrage when the arbitrage is based on informational advantage. That is the goal of the present investigation.  Our approach to the problem is to adapt the methods of information-based asset pricing \cite{BHM2022} to the dynamical setting of a hierarchy of traders stratified by the level of information that they can access. 

Before embarking upon details we give an overview of the arguments we propose to develop. In Section \ref{sec: the model} we recall the construction of the information-based price in a market with a single source of information concerning a security that delivers a single random cash flow at a fixed time $T$. The price process is presented in Proposition 1 and is worked out by filtering techniques based on the methods of references   \cite{BHM2007, BHM2008}. Here we reformulate that work in a rather general setting in which the cash flow is represented by an integrable random variable. 

The setup is then generalized to the situation where prices are made by a hierarchy of traders, wherein a tier-$n$ trader has at their disposal the filtration generated by a collection of $n$ information processes, the tier-$(n-1)$ trader works with the filtration made by the first $n-1$ of these processes, and so on. In Proposition 2 we work out the price made by a tier-$n$ trader, and in Proposition 3 we show that this price can be expressed in terms of a reduced ``effective" information process with a higher information flow rate given by the square root of the sum of the squares of the flow rates of the various component information sources. 

The arguments of Section \ref{sec: the model} are developed within the framework of a hierarchy of specific models for market information. The restriction to the case of a single random cash flow is merely for simplicity and can be generalized in a straightforward way to the situation where one has multiple cash flows dependent on multiple market factors and multiple associated sources of information  \cite{BHM2007, BHM2008, Macrina2006}.  In Section \ref{sec: deterministic trading time} the discussion is lifted to the general set up of a pair of traders where the filtration of Trader $B$ is a sub-filtration of that of Trader $A$. See \cite{BDFH2009, BHM2011} for similar trading schemes.
In the setting we consider, a {\it game master} monitors the prices made by the two traders and declares the terms under which a trade takes place. 

There is a fixed multiplicative spread $\phi > 1$, and in Scenario 1, the trades take place at a predesignated time and occur whenever the bid price of one of the traders exceeds that of the offer price of the other. There is a pricing kernel that determines a pricing measure, and hence a value can be assigned to Trader $A$'s position as he enters the arena. We assume the bank accounts of both traders are empty at the initiation of trading. Then under a mild non-triviality condition we prove  in Proposition 4 that Trader $A$'s position has a {\it strictly positive value} in this scenario. The method of proof extends that of reference  \cite{BDFH2009}. In Scenario 2 we consider the same trading setup and specialize to the case where Trader $A$ accesses a single information process of the Brownian bridge type and Trader $B$ is an uninformed trader with only the \textit{a priori} distribution of $X$ at his disposal and no updating. In that case we proceed in Proposition 5 to establish an exact lower bound on the value of Trader $A$'s position. 

In Section \ref{sec: trading when the spreads cross} we return to the general setting and consider the situation where a trade takes place on the first occasion over the trading interval $(0, T)$ when the spreads cross, that is, when the bid price of $A$ reaches the offer price of $B$ or the bid price of $B$ reaches the offer price of $A$. The theory of stopping-time $\sigma$-algebras is well suited for this kind of analysis (Lemma 1) and we are able to make novel use of the optional sampling theorem for uniformly integrable martingales for the proof of Proposition 6, which shows that the profitability of the informed trader in such a scenario is strictly positive. Figure 1 shows how the distribution of the trading times is concentrated in the early part of the trading session for smaller spreads and shifts towards the middle part of the trading session for larger spreads. Figure 2 shows the rather surprising result that the informed trader's average profit per trading session is a concave function of the spread factor at a given level of the information flow rate. 

In Section \ref{sec: successive random trades} we allow for the possibility of multiple trades under the assumption that following a trade each trader will take the value at which the trade was carried out to be their new quoted mid-price. Thus the quoted mid-prices are equalized and are made by adjusting the associated information based prices upward or downward by the spread factor. 
The new quoted mid-prices of the two traders then develop onward from there with the information subsequently received, and the associated bid and offer prices are determined by a further unit of the spread factor. Another trade takes place when the spreads cross again. In Scenario 4 we look at the case where the game master allows for trading to continue for up to a maximum of two trades over a given trading session, and then in Proposition 7 we show that the profitability of the informed trader is strictly positive in such a scenario. 

In Section \ref{sec: inventory aversion}, under Scenario 5, we develop a theory of inventory aversion to take into account the fact that in many trading situations traders prefer, everything else being the same, to run fairly flat books, moving in and out of positions in accordance with the arrival of favorable trading opportunities. We allow for this by keeping track of a trader's inventory and adjusting the quoted mid-price up or down by an inventory aversion adjustment factor $\psi$ for each unit of inventory. The two traders can have distinct inventory aversion adjustment factors, reflecting the fact that inventory aversion is an individual characteristic of a trader, unlike the spread factor, which is a market convention determined by the game master. 
Nevertheless there are some global constraints: in particular, in Lemma 2 we show that the inventory aversion risk factors are bounded from above by the spread factor. In the situation where the game master admits a maximum of two trades to take place, our Proposition 8 establishes that even with the presence of inventory aversion (possibly asymmetric) the profitability of the informed trader remains strictly positive. 

In Section \ref{sec: multiple trades} under Scenario 6 we examine the situation when more than two trades are admitted under the condition of inventory aversion. We prove a combinatorial identity in Lemma \ref{Lemma 3}, a result that is of  interest in its own right, illustrated in Figures 3 and 4. This paves the way to Proposition 9, establishing in a rather general setting the positivity of the informed trader's profitability when there are multiple trades. Figure 5 allows one to see in some detail examples of the trading dynamics when there are up to ten trades allowed in a given trading session. We see the bid and offer prices quoted by both traders, and the points at which the spreads cross are flagged with pointers to show whether the trade is buy or a sell. The inventories are tracked in the second panel, and the third panel plots the underlying information-based prices. The fourth panel shows the ratio of the quoted mid prices, and we show the boundaries at which trades are triggered. Then in Figure 6 we plot the average profitability of the informed trader as a function of the spread, which turns out to be concave, illustrating the fact that there is an ideal level for the spread, at a given level of  the information flow, at which the profitability is maximized. Figure 7 then presents this result in the form of a profitability surface parametrized by the spread factor and the information flow rate. Figure 8 shows the effect of inventory aversion and makes it clear that with increased levels of inventory aversion the average value of the magnitude of the inventory decreases. 
Finally, in Section \ref{sec: information dominates strategy} we show that if the trader with inferior information is allowed the strategic flexibility of an adapted spread and an adapted risk aversion, this makes no difference qualitatively to the outcome that the trader with superior information will dominate. We conclude in Section \ref{sec: conclusions}.
\section{The model}
\label{sec: the model}
\noindent We fix a probability space $\left(\Omega,\,\mcF,\,\mbP\right)$ and consider a trading model with stratified information. We look at several different variants of the model with increasing layers of complexity. An interval of time $(0,T)$ is fixed over which the trading takes place. The trading is in a contract that  at time $T$ pays a single non-negative dividend, which we model by an integrable random variable $X$. We assume that interest rates are deterministic and that $\mbP$ is the pricing measure (the risk-neutral measure), so the value of one contract at time $0$ (the present) is given by
\begin{equation}
    S_0 = P_{0T}\,\mbE\left[X\right].
\end{equation}
Here $P_{0t}$ denotes the discount factor out to time $t \in \mathbb R^+$. For simplicity, we assume that traders have the same information and same beliefs at time $0$ and that the position of each trader is initially flat. 

We introduce a hierarchy of traders as follows. By a tier-0 trader, we mean a trader whose only information is that implicit in the set-up described. Thus, he knows the \textit{a priori} measure $\mu_X(\diff x)$ of the random variable $X$ under $\mbP$, but has no other information. As a consequence, the prices made at later times are given by
\begin{equation}
    S_t = \mathds{1}(t<T)\,P_{tT}\,\mbE\left[X\right] ,
\end{equation}
where $P_{tT} = P_{0T}/P_{0t}$. Knowledge of the distribution of $X$ is equivalent to knowledge of the time-$0$ prices of all derivative securities with a payout at time $T$ of the form $\mathds 1\{X\leq x\}$ for some $x \in \mathbb R^+$. Note that $S_t=0$ for $t\geq T$ since the contract pays a single random dividend at $T$ and nothing thereafter. With this convention the price process has the c\`adl\`ag property. 

By a tier-1 trader we mean a trader who has access to a so-called information process $\{\xi_t\}_{t\geq 0}$. The information takes the form
\begin{equation}
    \xi_t = \sigma\,t\,X + \beta_{tT}\,,
\end{equation}
for $0\leq t \leq T$ and $\xi_t = \sigma\,T\,X$ for $t\geq T$. Here $\sigma$ is a non-negative parameter, called the information flow rate, and 
$\{\beta_{tT}\}_{t\geq 0}$ is a Brownian bridge over the interval $[0,T]$.  The Brownian bridge can be modelled by setting 
$   \beta_{tT} = B_t - t \, T^{-1} B_T$
for $0\leq t \leq T$ and $\beta_{tT}=0$ for $t\geq T$, where $\{B_{t}\}_{t\geq 0}$ is a standard Brownian motion.
The information $\xi_t$ available at time $t$ can be thought of as a signal $X$, corresponding to the upcoming cash flow, obscured by market noise. For some applications one can treat $X$ more generally as a market factor with the property that the cash flow is given by a function of $X$. Here for simplicity we stick to the case where $X$ represents the cash flow itself, but the reader will be readily able to see how the more general situation can be accommodated. The theory of information processes along with a variety of applications is set out in references \cite{BHM2007, BHM2008, BDFH2009, BHM2011, BHM2022}; see also \cite{HSB2020, Macrina2006, Rutkowski Yu}. 

Now, let us write $\{\mcF_t\}_{t\geq 0}$ for the smallest right-continuous complete filtration containing the natural filtration of $\{\xi_t\}$. Thus, we let 
${\mathcal N}_{\mathbb P}$ denote the set whose elements are subsets of the null sets of $\mathbb P$, and for each $t\geq0$ we set
\begin{equation}
\mcF_t = \bigcap_{u>t} \sigma [\{\xi_s\}_{s\leq u} ] \, \vee {\mathcal N}_{\mathbb P}\,.
\end{equation}
We refer to $\{\mcF_t\}$ as the filtration generated by $\{\xi_t\}$. Then the price of the asset at time $t$ is
\begin{equation} \label{price general}
    S_t = \mathds{1}(t<T) \,P_{tT}\,\mbE\left[X  \given \mathcal F_t \, \right] ,
\end{equation}
and we have the following: 
\begin{Proposition}
The information-based price of a contract that pays a non-negative integrable random cash flow $X$ at time $T$ takes the form
\begin{equation}
    S_t = \mathds{1}(t<T) \,P_{tT}\,\frac{\int_{x\in \mathbb R^+} x\,\re^{\left(\sigma\,x\,\xi_t - \frac{1}{2}\,\sigma^2\,x^2\,t\right)\,\frac{T}{T-t}}\,\mu_X(\diff x)}{\int_{x\in \mathbb R^+} \re^{\left(\sigma\,x\,\xi_t - \frac{1}{2}\,\sigma^2\,x^2\,t\right)\,\frac{T}{T-t}}\,\mu_X(\diff x)}\,.
\label{price}
\end{equation}
\end{Proposition}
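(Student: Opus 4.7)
The plan is to evaluate the conditional expectation in the pricing formula \eqref{price general} by exploiting two ingredients: the Markov property of the information process $\{\xi_t\}$ with respect to its own filtration $\{\mcF_t\}$, and a direct application of a Bayes-type formula using the explicit Gaussian law of $\xi_t$ given $X=x$. Granting the Markov property, I can replace $\mbE[X\given \mcF_t]$ by $\mbE[X\given \xi_t]$, so that for $t<T$ the entire problem reduces to computing a conditional expectation with respect to a single real-valued random variable. The factor $\mathds{1}(t<T)\,P_{tT}$ then follows trivially from \eqref{price general}.

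First I would establish the Markov property. Since $\xi_t = \sigma t X + \beta_{tT}$, and since the Brownian bridge has the property that, conditional on $\beta_{tT}$, the future increments depend on the past only through the present value, one obtains that for any $0\leq s\leq t<T$ and any bounded measurable $f$,
\begin{equation}
\mbE[f(\xi_t)\given \mcF_s] = \mbE[f(\xi_t)\given \xi_s]\,.
\end{equation}
The standard argument (as in \cite{BHM2007,BDFH2009}) shows that $\{(\xi_u-\xi_s)/(T-u)\}_{u\geq s}$ is independent of $\mcF_s$ up to a shift, from which the Markov property and hence the reduction $\mbE[X\given \mcF_t]=\mbE[X\given \xi_t]$ follow.

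Next I would apply Bayes' formula. Conditional on $X=x$, the variable $\xi_t$ is Gaussian with mean $\sigma t x$ and variance $t(T-t)/T$. Writing the joint law of $(X,\xi_t)$ as the product of $\mu_X(\diff x)$ and this Gaussian density, Bayes gives the posterior
\begin{equation}
\mbP(X\in \diff x\given \xi_t) = \frac{\exp\!\left(-\frac{T}{2t(T-t)}(\xi_t-\sigma t x)^2\right)\,\mu_X(\diff x)}{\int \exp\!\left(-\frac{T}{2t(T-t)}(\xi_t-\sigma t x')^2\right)\,\mu_X(\diff x')}\,.
\end{equation}
Expanding the square and cancelling the $x$-independent factor $\exp(-\tfrac{T}{2t(T-t)}\xi_t^2)$ between numerator and denominator leaves, after simplifying the remaining coefficients, exactly the kernel $\exp\!\left((\sigma x\xi_t-\tfrac12\sigma^2 x^2 t)\,\tfrac{T}{T-t}\right)$ appearing in \eqref{price}. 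Integrating $x$ against this posterior yields the claimed expression, and multiplying by $\mathds{1}(t<T)\,P_{tT}$ completes the proof.

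The main technical obstacle is the careful justification of the Markov property; the Bayesian computation is then a routine Gaussian manipulation. One must also verify integrability so that the Bayes formula is well-posed, which follows from the assumption that $X$ is non-negative and integrable together with the boundedness of the Gaussian likelihood in $x$ on compact time intervals $[0,t]$ with $t<T$.
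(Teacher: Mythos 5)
Your proposal is correct and takes essentially the same route as the paper's own proof: reduce $\mbE[X \given \mcF_t]$ to $\mbE[X \given \xi_t]$ via the Markov property of the information process (which the paper likewise delegates to the cited references), then apply a Bayes-type formula using the conditional Gaussian law of $\xi_t$ given $X=x$ with mean $\sigma t x$ and variance $t(T-t)/T$. Your product-measure formulation of Bayes is precisely the content of the generalized Bayes lemma the paper invokes from Hughston and S\'anchez-Betancourt, and your explicit expansion cancelling the $x$-independent factor correctly yields the kernel $\exp\bigl((\sigma x \xi_t - \tfrac{1}{2}\sigma^2 x^2 t)\tfrac{T}{T-t}\bigr)$, a computation the paper leaves as ``straightforward.''
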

\begin{proof}
One can show that the information process has the Markov property \cite{BHM2007}, from which it follows that 
\begin{equation}
    S_t = \mathds{1}(t<T) \,P_{tT}\,\mbE\left[X \given  \xi_t \, \right] .
\end{equation}
Now write $\{F_{X}(x)\}_{x\in\mathbb R^+}$ for the distribution of $X$, where $F_{X}(x)=\mathbb{P}\left(X\leq x\right)$, and let $Y$ be a continuous random variable with distribution $\{F_{Y}(y)\}_{y\in \mathbb{R}}$ and density $\{f_{Y}(y)\}_{y\in \mathbb{R}}$. Then there is a generalized Bayes theorem (\cite{HSB2020}, Lemma 3) to the effect that for all $y\in\mathbb{R}$ at which $f_{Y}(y)\neq 0$ it holds that
\begin{equation}
F^{(x)}_{X|Y=y}=\frac{\int_{u\in[0,x]}f_{Y|X=u}^{(y)}\,\mu_X(\diff u)}{\int_{u\in[0,\infty)}f_{Y|X=u}^{(y)}\,\mu_X(\diff u)} \, ,
\end{equation}
where $F^{(x)}_{X|Y=y}$ denotes the conditional distribution $\mathbb{P}\left(X\leq x \given Y=y\right)$, and
\begin{equation}
  f_{Y|X=u}^{(y)}=\frac{\diff}{\diff y}\mathbb{P}\left(Y\leq y\given X=u\right)  .
\end{equation}
It follows that for any outcome of chance $\omega \in \Omega$ we have
\begin{equation} \label{eq: generalized Bayes}
\mbE\left[X \mid  Y_{\omega} \, \right]\ = \frac{\int_{x\in \mathbb R^+} x \,f_{Y|X=x}^{(Y_{\omega})}\,\mu_X(\diff x)}{\int_{x\in \mathbb R^+} f_{Y|X=x}^{(Y_{\omega})}\,\mu_X(\diff x)} \, ,
\end{equation}
where $Y_{\omega}$ denotes the value of $Y$ at $\omega \in \Omega$. Then if we let the random variable $\xi_t$ play the role of $Y$ in the setup described above and use the fact that $\xi_t$ is conditionally Gaussian with mean $\sigma t X$ and variance $t(T-t)/T$, it is straightforward to work out the conditional distribution of $X$ given $\xi_t$ by use of \eqref{eq: generalized Bayes}, and \eqref{price} follows. 
\end{proof} 
We refer to $S_t$ as the information-based price made at time $t$ by a tier-1 trader. Note that $S_T=0$, whereas
$ \lim_{t\to T} S_t = X$.
By a higher-tier trader we mean a trader who has one or more additional information processes at his disposal in such a way that the filtration accessible to the tier-$n$ trader is a sub-filtration of that accessible to the tier-$(n+1)$ trader. We adapt the notation above to the case where we have $n$ information processes and write
\begin{equation}
    \xi^i_t = \sigma_i\,\min(t,T)\,X + \beta^i_{tT}
\end{equation}
for $i \in\{1,\,\dots\,, \,n\}$, $t \geq 0$. Here $\sigma_i$ denotes the information flow rate for the information process $\{\xi^i_t\}$, and the $\{\beta^i_{tT}\}_{i=1,\,\dots\,,\,n}$ are independent Brownian bridges. In principle, one could look at the seemingly more general situation where the $n$ Brownian bridges are correlated, but after suitable linear transformations this can be reduced to the case under consideration. We recall that a tier-0 trader has no information available other than that which is available \textit{a priori} to all traders.  A tier-0 trader thus makes the price
\begin{equation}
    S^0_t = \mathds{1}(t<T)\,P_{tT}\,\mbE\left[X\right].
\end{equation}
A tier-1 trader makes the price 
\begin{equation}
    S^1_t = \mathds{1}(t<T)\,P_{tT}\,\mbE\left[X\given \xi^1_t\right],
\end{equation}
a tier-2 trader makes the price 
\begin{equation}
    S^{1:2}_t = \mathds{1}(t<T)\,P_{tT}\,\mbE\left[X\given \xi^1_t,\, \xi^2_t\right],
\end{equation}
and a tier-$n$ trader makes the price 
\begin{equation}
    S^{1:n}_t = \mathds{1}(t<T)\,P_{tT}\,\mbE\left[X\given \xi^1_t,\,\dots,\, \xi^n_t\right].
\end{equation}
Then, it becomes an exercise to prove the following: 
\begin{Proposition}
The tier-$n$ price is given by 
\begin{equation}\label{eq: price 1:n}
    S^{1:n}_t = \mathds{1}(t<T) \,P_{tT}\,\frac{\int_0^\infty x\,\re^{\left(\sigma_1\,x\,\xi^1_t - \frac{1}{2}\,\sigma_1^2\,x^2\,t\right)\,\frac{T}{T-t}}  \,\cdots\, \re^{ \left(\sigma_n\,x\,\xi^n_t - \frac{1}{2}\,\sigma_n^2\,x^2\,t\right)\,\frac{T}{T-t} }\,\mu(\diff x)}{\int_0^\infty \re^{\left(\sigma_1\,x\,\xi^1_t - \frac{1}{2}\,\sigma_1^2\,x^2\,t\right)\,\frac{T}{T-t} } \,\cdots\, \re^{ \left(\sigma_n\,x\,\xi^n_t - \frac{1}{2}\,\sigma_n^2\,x^2\,t\right)\,\frac{T}{T-t} }\,\mu(\diff x)}\,.
\end{equation}
\end{Proposition}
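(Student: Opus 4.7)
The plan is to mimic the proof of Proposition 1, replacing the single conditioning random variable $\xi_t$ with the vector $\Xi_t := (\xi^1_t,\ldots,\xi^n_t)$. First, I would establish that the multi-dimensional information process $\{\Xi_t\}_{t\geq 0}$ is jointly Markov. This is essentially the same argument as in \cite{BHM2007}: since the Brownian bridges $\{\beta^i_{tT}\}$ are mutually independent and each has the Markov property conditional on $X$, the joint increments $\Xi_u - \Xi_t$ for $u>t$ depend on the past only through $\Xi_t$ (together with the conditional law of $X$ given $\Xi_t$, which is itself a functional of $\Xi_t$). Hence
\begin{equation}
    S^{1:n}_t = \mathds{1}(t<T)\,P_{tT}\,\mbE\left[X\given \Xi_t\right]\,.
\end{equation}

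Next, I would apply a multivariate version of the generalized Bayes theorem stated in \eqref{eq: generalized Bayes}, with $Y$ taken to be the $\mathbb R^n$-valued random vector $\Xi_t$. The required conditional density $f^{(\Xi_t)}_{\Xi_t\mid X=x}$ is the joint density of $(\xi^1_t,\ldots,\xi^n_t)$ given $X=x$. Because the $\{\beta^i_{tT}\}$ are independent and each Brownian bridge $\beta^i_{tT}$ is Gaussian with mean zero and variance $t(T-t)/T$, conditional on $X=x$ the random variables $\xi^i_t = \sigma_i t x + \beta^i_{tT}$ are independent Gaussians with means $\sigma_i t x$ and common variance $t(T-t)/T$. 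Therefore the joint conditional density factorizes:
\begin{equation}
    f^{(y_1,\ldots,y_n)}_{\Xi_t\mid X=x} = \prod_{i=1}^n \frac{1}{\sqrt{2\pi t(T-t)/T}}\,\exp\!\left(-\frac{(y_i-\sigma_i t x)^2}{2 t(T-t)/T}\right) .
\end{equation}

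Substituting this product into the multivariate version of \eqref{eq: generalized Bayes} and cancelling the $x$-independent Gaussian normalizations and the $x$-independent terms $\exp(-y_i^2 T/(2t(T-t)))$ between numerator and denominator leaves, in each factor, the exponential of $(\sigma_i x y_i - \tfrac12 \sigma_i^2 x^2 t)\,T/(T-t)$. Setting $y_i = \xi^i_t$ and collecting yields \eqref{eq: price 1:n}. The only nontrivial step is the joint Markov property, but since the $\{\xi^i_t\}$ share the same driving random variable $X$ yet have independent noise components, this follows by the standard argument of \cite{BHM2007} applied componentwise; the rest is a routine Gaussian calculation with the exponents adding across the $n$ factors.
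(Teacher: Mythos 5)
Your proof is correct, but it takes a genuinely different route from the paper's. The paper does not perform a multivariate Bayes computation at all: for $n=2$ it forms the normalized linear combinations $\xi^{1:2}_t = (\sigma_1\xi^1_t + \sigma_2\xi^2_t)/\sqrt{\sigma_1^2+\sigma_2^2}$ and $\eta_t = (\sigma_1\xi^1_t - \sigma_2\xi^2_t)/\sqrt{\sigma_1^2+\sigma_2^2}$, checks via covariance relations that $\{\xi^{1:2}_t\}$ is itself an information process with effective flow rate $\sigma_{12}=\sqrt{\sigma_1^2+\sigma_2^2}$ while $\{\eta_t\}$ is a Brownian bridge with $\sigma\left[\sigma[X],\,\sigma[\{\xi^{1:2}_t\}]\right]$ independent of $\sigma[\{\eta_t\}]$, and then invokes Williams' conditional-expectation property (\S 9.7(k)) to drop $\eta_t$ from the conditioning, giving $\mbE\left[X \given \xi^1_t,\,\xi^2_t\right] = \mbE\left[X \given \xi^{1:2}_t\right]$; the product form then follows by applying the already-proved tier-1 formula of Proposition 1 to the effective process and substituting back. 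Your direct computation --- conditional independence of the $\xi^i_t$ given $X=x$, factorized Gaussian likelihoods with means $\sigma_i t x$ and variance $t(T-t)/T$, cancellation of the $x$-independent factors --- is sound and in some ways more elementary and more robust: it would survive heterogeneous noise variances or other conditional laws admitting densities, settings in which no single effective information process exists. What it does not buy you is the structural insight the paper is engineering: the reduction to one effective process with the Pythagorean flow rate $\sigma^{1:n}=\sqrt{\sum_{i=1}^n \sigma_i^2}$ is exactly the content of Proposition 3 and is used later, and the paper's proof of Proposition 2 delivers it for free. Two smaller remarks: since the paper defines $S^{1:n}_t$ by conditioning on the random variables $\xi^1_t,\dots,\xi^n_t$ themselves rather than on the filtration, your joint-Markov step is not actually required for the statement as posed (it is needed only to reconcile with the filtration-based formulation of Section 2), and your sketch of it is the least rigorous part of the proposal; also, the multivariate extension of the generalized Bayes lemma should be stated explicitly, since Lemma 3 of \cite{HSB2020}, as cited in the proof of Proposition 1, is formulated for a scalar $Y$.
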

\begin{proof}
It will suffice to look at the case $n=2$ to illustrate the method of proof. We wish to show that 
\begin{equation}\label{eq: price 1:2 to show}
    S^{1:2}_t = \mathds{1}(t<T) \,P_{tT}\,\frac{\int_0^\infty x\,\re^{\left(\sigma_1\,x\,\xi^1_t - \frac{1}{2}\,\sigma_1^2\,x^2\,t\right)\,\frac{T}{T-t} }\,\re^{  \left(\sigma_2\,x\,\xi^2_t - \frac{1}{2}\,\sigma_2^2\,x^2\,t\right)\,\frac{T}{T-t} }\,\mu(\diff x)}{\int_0^\infty \,\re^{\left(\sigma_1\,x\,\xi^1_t - \frac{1}{2}\,\sigma_1^2\,x^2\,t\right)\,\frac{T}{T-t} }\,\re^{  \left(\sigma_2\,x\,\xi^2_t - \frac{1}{2}\,\sigma_2^2\,x^2\,t\right)\,\frac{T}{T-t} }\,\mu(\diff x)}\,.
\end{equation}
We observe that 
\begin{align}
\mbE\left[X \given \xi^1_t,\,\xi^2_t  \right] &= \mbE\left[X \given \sigma_1\,\xi^1_t + \sigma_2\,\xi^2_t ,\,  \sigma_2\,\xi^1_t - \sigma_1\,\xi^2_t  \right] ,
\end{align}
from which one deduces that 
\begin{equation}
 \mbE\left[X \given \xi^1_t,\,\xi^2_t  \right]  = \mbE\left[X \given \xi^{1:2}_t,\,\eta^{1:2}_t  \right] ,
\end{equation}
where 
\begin{equation}
\xi^{1:2}_t = \frac{\sigma_1\,\xi^1_t + \sigma_2\,\xi^2_t }{\sqrt{\sigma^2_1 + \sigma^2_2}}, \quad 
\eta^{1:2}_t =  \frac{\sigma_2\,\xi^1_t - \sigma_1\,\xi^2_t }{\sqrt{\sigma^2_1 + \sigma^2_2}} \,.
\end{equation}
In fact, $\{\xi^{1:2}_t\}_{t\geq 0}$ can be interpreted as the ``effective information" generated jointly by $\{\xi^{1}_t\}_{t\geq 0}$ and $\{\xi^{2}_t\}_{t\geq 0}$. One can check that $\{\xi^{1:2}_t\}$ is an information process and that $\{\eta^{1:2}_t\}$ is an independent Brownian bridge. Indeed, we have 
\begin{equation} \label{eq: information 1:2}
\xi^{1:2}_t = \sigma_{1:2}\,t\,X + \beta^{1:2}_t\,,
\end{equation}
where 
\begin{equation} \label{effective flow rate tier 2}
\sigma_{1:2} = \sqrt{\sigma^2_1 + \sigma^2_2}
\end{equation}
and the process $\{\beta^{1:2}_t\}_{t\geq 0}$ defined by 
\begin{equation}
\beta^{1:2}_t = \frac{\sigma_1\,\beta^1_{tT} + \sigma_2\,\beta^2_{tT}}{\sigma_{1:2}}
\end{equation}
is a standard Brownian bridge over $[0,T]$. The fact that $\{\eta^{1:2}_t\}$ is itself a Brownian bridge and that $\{\xi^{1:2}_t\}$ and $\{\eta^{1:2}_t\}$ are independent can be deduced from covariance relations. Moreover, it should be evident that the $\sigma$-algebras $\sigma\left[\sigma[X],\,\sigma[\{\xi^{1:2}_t\}_{t\geq 0}] \, \right]$ and $\sigma\left[\{\eta^{1:2}_t\}_{t\geq 0}\right]$ are independent. 

Now, we recall the following from  Williams \cite{williams1991}, \S 9.7(k)  in his well-known list of properties of conditional expectation. Suppose that $Z$ is an integrable random variable on a probability space $(\Omega, \mathcal F, \mathbb P)$ and that $\mathcal A$ and $\mathcal B$ are sub-$\sigma$-algebras of $\mathcal F$.  Then if 
$\sigma [\, \sigma[Z], \mathcal A]$ and $\mathcal B$ are independent, we have
\begin{equation}
\mbE\left[Z \given \sigma [ \mathcal A, \mathcal B \,] \, \right] = \mbE\left[Z \given \mathcal A \, \right] .
\end{equation}
It follows in the case under consideration that
\begin{align}
\mbE\left[X \given \xi^1_t,\,\xi^2_t  \right] &= \mbE\left[X \given \xi^{1:2}_t \right] .
\end{align}
That is to say, {\it the tier}-2 {\it price only depends on the effective information associated with the two given information processes}. Thus, we have 
\begin{equation}\label{eq: price 1:2}
    S^{1:2}_t = \mathds{1}(t<T) \,P_{tT}\,\frac{\int_0^\infty x\,\re^{\left(\sigma_{12}\,x\,\xi^{1:2}_t - \frac{1}{2}\,\sigma_{12}^2\,x^2\,t\right)\,\frac{T}{T-t}}\,\mu(\diff x)}{\int_0^\infty \re^{\left(\sigma_{12}\,x\,\xi^{1:2}_t - \frac{1}{2}\,\sigma_{12}^2\,x^2\,t\right)\,\frac{T}{T-t}}\,\mu(\diff x)}\,.
\end{equation}
Then if we substitute \eqref{eq: information 1:2} into \eqref{eq: price 1:2}, a straightforward calculation gives the result \eqref{eq: price 1:2 to show}. The proof for higher values of $n$ is similar.
\end{proof}
It is interesting to note, as we have pointed out above, that the tier-2 price \eqref{eq: price 1:2} can be expressed as a function of a single effective parcel of information, given by \eqref{eq: information 1:2}, with effective flow rate \eqref{effective flow rate tier 2}. This property generalizes to  higher $n$. In particular,  we have:
\begin{Proposition}
The tier-n price of an asset paying $X$ at time $T$ can be put in the form 
\begin{equation}
    S^{1:n}_t = \mathds{1}(t<T) \,P_{tT}\,\frac{\int_0^\infty x\,\re^{\left(\sigma_{1:n}\,x\,\xi^{1:n}_t - \frac{1}{2}\,\sigma_{1:n}^2\,x^2\,t\right)\,\frac{T}{T-t}}\,\mu(\diff x)}{\int_0^\infty \re^{\left(\sigma_{1:n}\,x\,\xi^{1:n}_t - \frac{1}{2}\,\sigma_{1:n}^2\,x^2\,t\right)\,\frac{T}{T-t}}\,\mu(\diff x)}\,,
\end{equation}
where the effective information $(\xi^{1:n}_t)_{t\geq 0}$ is defined by
\begin{equation}
\xi^{1:n}_t = \frac{1}{\sigma_{1:n}}\, \sum_{i=1}^{n} \sigma_i\,\xi^i_t, \quad \sigma_{1:n}^2 = \sum_{i=1}^n \sigma^2_i
\end{equation}
\end{Proposition}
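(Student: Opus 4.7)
The plan is to argue by induction on $n$, using the tier-2 result of Proposition 2 as the inductive step. The base cases $n=1$ and $n=2$ are already in hand: Proposition 1 handles $n=1$, while the proof of Proposition 2 handles $n=2$, where the effective information is $\xi^{1:2}_t = (\sigma_1\xi^1_t + \sigma_2\xi^2_t)/\sqrt{\sigma_1^2+\sigma_2^2}$ with effective flow rate $\sigma_{12}=\sqrt{\sigma_1^2+\sigma_2^2}$. Crucially, the tier-2 argument shows not only that $\xi^{1:2}_t$ captures all the information in $(\xi^1_t,\xi^2_t)$ relevant to $X$, but also that $\xi^{1:2}_t$ is itself an information process of the same form as the inputs. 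This regenerative feature is what makes an induction possible.

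For the inductive step, assume the claim for $n-1$ information sources. Then $\xi^{1:n-1}_t = \sigma_{1:n-1}\,t\,X + \beta^{1:n-1}_{tT}$ with $\sigma_{1:n-1}=\sqrt{\sum_{i=1}^{n-1}\sigma_i^2}$ and driving bridge $\beta^{1:n-1}_{tT}=\sigma_{1:n-1}^{-1}\sum_{i=1}^{n-1}\sigma_i\beta^i_{tT}$, and moreover $\mathbb{E}[X\mid \xi^1_t,\dots,\xi^{n-1}_t]=\mathbb{E}[X\mid \xi^{1:n-1}_t]$. Since the bridges $\beta^1_{tT},\dots,\beta^{n-1}_{tT}$ are independent of $\beta^n_{tT}$, the pair $(\xi^{1:n-1}_t,\xi^n_t)$ satisfies precisely the hypotheses of the tier-2 construction. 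Applying Proposition 2 to this pair yields the effective information
\begin{equation}
\xi^{1:n}_t \;=\; \frac{\sigma_{1:n-1}\,\xi^{1:n-1}_t + \sigma_n\,\xi^n_t}{\sqrt{\sigma_{1:n-1}^2+\sigma_n^2}} \;=\; \frac{1}{\sigma_{1:n}}\sum_{i=1}^n \sigma_i\,\xi^i_t \,,
\end{equation}
with effective flow rate $\sigma_{1:n}=\sqrt{\sigma_{1:n-1}^2+\sigma_n^2}=\sqrt{\sum_{i=1}^n\sigma_i^2}$, matching the claim. Feeding $(\sigma_{1:n},\xi^{1:n}_t)$ into the tier-1 pricing formula of Proposition 1 then delivers the stated expression for $S^{1:n}_t$, and also shows that $\xi^{1:n}_t$ is itself an information process, preserving the inductive hypothesis for use at the next level.

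The main subtlety is the measure-theoretic independence needed to justify applying Williams's conditioning lemma at the inductive step, namely that $\sigma[\sigma[X],\sigma[\{\xi^{1:n-1}_s\}_{s\geq 0}]]$ is independent of the $\sigma$-algebra generated by the residual Brownian-bridge degree of freedom orthogonal to $\xi^{1:n}_t$. This reduces to the same covariance computation carried out for $n=2$: because all bridges involved are jointly Gaussian and independent of $X$, and the orthogonal residual (proportional to $\sigma_n\xi^{1:n-1}_t-\sigma_{1:n-1}\xi^n_t$) has zero covariance with $\xi^{1:n}_t$ at every pair of times, independence follows from vanishing covariance for Gaussian families. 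Given the already-established $n=2$ case, no new ideas are required beyond the bookkeeping of the induction.
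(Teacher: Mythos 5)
Your strategy — recursive pairwise aggregation, exploiting the regenerative fact that $\xi^{1:n-1}_t=\sigma_{1:n-1}\,t\,X+\beta^{1:n-1}_{tT}$ is itself an information process — is genuinely different from the paper's proof, which handles all $n$ sources in one shot by decomposing $(\xi^1_t,\dots,\xi^n_t)$ into the effective information $\xi^{1:n}_t$ plus an explicit family of $n-1$ independent residual bridges $\eta^{1:n-i}_t$ and applying Williams's lemma once. However, as written your inductive step has a gap. You pass from $\mathbb{E}[X\mid\xi^1_t,\dots,\xi^n_t]$ to $\mathbb{E}[X\mid\xi^{1:n-1}_t,\xi^n_t]$ on the strength of the inductive hypothesis $\mathbb{E}[X\mid\xi^1_t,\dots,\xi^{n-1}_t]=\mathbb{E}[X\mid\xi^{1:n-1}_t]$, but equality of conditional expectations is not stable under enlarging the conditioning $\sigma$-algebra: if $X$ is uniform on $\{-1,0,1\}$ and $Y=X^2$, then $\mathbb{E}[X\mid Y]=\mathbb{E}[X]$, yet after adjoining a noisy observation $\xi=X+\epsilon$ one has $\mathbb{E}[X\mid Y,\xi]\neq\mathbb{E}[X\mid\xi]$ in general. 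What the reduction actually requires is the stronger sufficiency statement that the $n-2$ residual degrees of freedom inside the first block — the linear combinations of $\xi^1_t,\dots,\xi^{n-1}_t$ orthogonal to $\xi^{1:n-1}_t$ — are independent of $\sigma\left[\sigma[X],\,\sigma[\xi^{1:n-1}_t,\,\xi^n_t]\right]$, so that one application of Williams's lemma yields $\mathbb{E}[X\mid\xi^1_t,\dots,\xi^n_t]=\mathbb{E}[X\mid\xi^{1:n-1}_t,\xi^n_t]$; only then is Proposition 2 applicable to the pair. Your closing paragraph verifies independence only for the single pair-level residual proportional to $\sigma_n\,\xi^{1:n-1}_t-\sigma_{1:n-1}\,\xi^n_t$, which justifies the tier-2 application itself but not this preliminary reduction.

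The repair is routine and stays within your framework: strengthen the inductive hypothesis from the pricing identity to the structural statement that $(\xi^1_t,\dots,\xi^{n-1}_t)$ is an invertible linear image of $(\xi^{1:n-1}_t,\eta^{(1)}_t,\dots,\eta^{(n-2)}_t)$, where the $\eta^{(j)}$ are bridges built from $\beta^1,\dots,\beta^{n-1}$ alone, jointly Gaussian with and of zero covariance with $\beta^{1:n-1}$, hence independent of $\sigma[X,\,\xi^{1:n-1}_t,\,\xi^n_t]$ (using also that $\beta^n$ is independent of the first block). Alternatively, since the Bayes argument behind Propositions 1 and 2 in fact delivers the full conditional law of $X$, not merely its mean, you could carry the inductive hypothesis at the level of conditional distributions, which is stable under adjoining the independent-noise observation $\xi^n_t$. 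With either strengthening the induction closes, and your argument then stands as a legitimate alternative to the paper's: it replaces the paper's explicit closed-form residual family $\eta^{1:n-i}$ by a two-at-a-time orthogonalization, at the cost of having to propagate a sufficiency statement rather than just the formula.
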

\begin{proof}
We observe that 
\begin{equation}
\mathbb{E}\left[X \given \xi^1_t,\dots ,\xi^n_t \right] = \mathbb{E}\left[X \given  \xi^{1:n}_t,\,\eta^{1:2}_t,\dots , \eta^{1:n}_t \right] ,
\end{equation}
where $\xi^{1:n}_t$ is the effective information at $t$ and the $(\eta_t^{1:i})_{t\geq 0}$, defined for $i\in\{2,\dots, n\}$  by
\begin{equation}
\eta^{1:i}_t = \frac{1}{\sigma_{1:i}}\left( \sigma_i \,\xi^{1:\,i-1}_t - \sigma_{1:\,i-1}\, \xi^i_t\right),
\end{equation}
constitute a family of $n-1$ independent Brownian bridges. For each value of $i\in\{2,\dots, n\}$ the $X$-dependence cancels in the terms on the right and hence we can write
\begin{equation}
\eta^{1:i}_t = \frac{1}{\sigma_{1:i}}\left( \sigma_i \,\beta^{1:\,i-1}_t - \sigma_{1:\,i-1}\, \beta^i_t\right), \quad
\beta^{1:i}_t = \frac{1}{\sigma_{1:i}}\, \sum_{j=1}^{i} \sigma_j\,\beta^j_t.
\end{equation}
The fact that the $(\eta_t^{1:i})$ are indeed independent Brownian bridges can be then checked by the use of covariance relations. The $\sigma$-algebras $\sigma[X,\,\xi^{1:n}_t]$ and $\sigma[\eta^{1:i}_t:i=2,\dots, n ]$ are therefore independent and thus
\begin{equation}
\mathbb{E}\left[X \given \xi^1_t,\,\dots \,,\,\xi^n_t  \right] = \mathbb{E}\left[X \given \xi^{1:n}_t  \right] ,
\end{equation}
as claimed.
\end{proof}

We observe that as a consequence of these relations the effective information flow rate for a tier-$n$ trader takes the form
\begin{equation}
\sigma_{1:n} = \sqrt{\sum_{i=1}^n \sigma^2_i}\, ,
\end{equation}
a relation that might appropriately be referred to as a Pythagorean law of information flow: when there are multiple sources of information, the square of the effective information flow rate is equal to the sum of the squares of the information flow rates of the sources \cite{BDFH2009, BHM2011}. 
\section {Deterministic trading time}
\label{sec: deterministic trading time}
\noindent It will be useful to establish some general principles regarding trading with information. As in the previous section, we consider a market where a single risky security is traded that pays a random dividend $X \geq 0$ at a fixed time $T>0$.  As before, we take interest rates to be deterministic and we let $\mathbb P$ denote the pricing measure. In a more extended treatment we might enter into a discussion of how the pricing measure is determined, but that is not our goal here. 

We look at the situation where there are two traders, a higher-tier Trader $A$ who has the information
$\mathbb F^A = \{\mathcal F^n_t\}$ generated by $\{\xi^{1:n}_t\}_{t\geq 0}$ at his disposal, and a lower-tier Trader $B$ who has the information
$\mathbb F^B = \{\mathcal F^m_t\}$ generated by $\{\xi^{1:m}_t\}_{t\geq 0}$ at his disposal, where $n > m$. Then clearly at each time $t$ we have 
$\mathcal F^B_t \subset \mathcal F^A_t$, reflecting the fact that the higher-tier trader has more information at his disposal than his lower-tier counterparty. 
Equivalently, we say that the filtration $\mathbb F^A$ is finer than the filtration $\mathbb F^B$ and that $\mathbb F^B$ is courser than $\mathbb F^A$.  The key point is that at the outset of the trading the traders have identical information and identical beliefs, both embodied in $\mathbb P$. After time $0$ and up to time $T$ the traders gain information and make prices based on this information in such a way that the information gained by Trader $B$ is always a subset of the information gained by Trader $A$.  

Let us assume that the market conventions are such that traders make prices with a fixed multiplicative spread. We refer to the information-based price $S_t$ computed by a given trader at time $t$ as his mid-price at that time. The spread factor $\phi$ is taken to be a fixed real number that is strictly greater than unity. Then the trader's offer price (the price at which he is willing to sell the asset) will be  $\phi S_t$, and his bid price (the price he is willing to buy the asset at) will be $\phi^{-1} S_t$. 

The so-called bid-offer (or bid-ask) spread is then given by $(\phi - \phi^{-1}) S_t$. The discussion can be easily adjusted to accommodate the case of a fixed additive spread, but we shall leave that to the reader and stick with multiplicative spreads here. The mid-price is then given by the geometric mean of the bid price and the offer price. 

We shall assume in what follows that the market is overseen by a {\it game master} (e.g., an exchange or central planner). The traders report their mid-prices to the game master (but not to one another) on a continuous basis, and the game master determines, in accordance with certain conventions, when a trade is deemed to have taken place. Neither trader is aware of informational status of the other trader. Trader $A$ receives the information that Trader $B$ receives, but he is not aware of the fact that this is the only information being received by Trader $B$. There are various situations that can be envisaged, and we refer to these as {\it trading scenarios}, several examples of which we proceed to examine.

\vspace{.3cm}
\noindent {\bf Scenario 1}. The game master selects a fixed time $t \in (0, T)$ at which a trade can take place. The traders each initially have flat positions. We shall write
\begin{equation}\label{eq: S^A}
S^A_t =  \mathds{1}(t<T)\,P_{tT}\, \mathbb{E}\left[X \given \mathcal F^A_t \right]
\end{equation}
for the mid-price made by Trader $A$ at time $t$, and let $S^{A-}_t = \phi^{-1}S^A_t$ and $S^{A+}_t = \phi S^A_t$ denote his bid price and offer price, respectively. For the mid-price of Trader $B$ at $t$, we write
\begin{equation}\label{eq: S^B}
S^B_t =  \mathds{1}(t<T)\,P_{tT}\, \mathbb{E}\left[X \given \mathcal F^B_t \right] ,
\end{equation}
with bid price $S^{B-}_t = \phi^{-1}S^B_t$ and offer price $S^{B+}_t = \phi S^B_t$. The game master declares that Trader $A$ sells one unit of the asset to Trader $B$ at time $t$ if $S^{A+}_t \leq  S^{B-}_t$;  the game master declares that Trader $A$ buys one unit of the asset from Trader $B$ at time $t$ if 
$S^{A-}_t \geq  S^{B+}_t$; if neither of these conditions hold then no trade takes place. By convention, the game master assigns the mean price 
 $(S^{A+}_t S^{B-}_t)^{1/2}$ to the trade if $A$ sells to $B$; the game master assigns the mean price  $(S^{A-}_t S^{B+}_t)^{1/2}$ to the trade if $A$ buys from $B$. In fact, these prices are the same, both being equal to the geometric mean of the mid-prices made by $A$ and $B$. The overall profit (or loss) $H^A_T$ accruing to Trader $A$ at time $T$ is thus given by 
\begin{align}
 H^A_T = \left [(P_{tT})^{-1}(S^{A}_t S^{B}_t)^{1/2} - X \right] \mathds 1(S^{A+}_t \leq S^{B-}_t) + 
  \left [ X - (P_{tT})^{-1}(S^{A}_t S^{B}_t)^{1/2} \right] \mathds 1(S^{A-}_t \geq  S^{B+}_t) ,
\end{align}
where payments made or received at $t$ are future-valued to $T$. The profit (or loss) of Trader $B$ is clearly $ H^B_T =  -H^A_T$. Now, since $\mathbb P$ is the risk neutral measure, it is meaningful to assign an overall value to Trader $A$'s strategy, given by  
\begin{equation} \label{value at time 0}
H^A_0 = P_{0T} \, \mathbb{E}\left[H^A_T \right] .
\end{equation}
Here we have taken the $\mathbb P$-expectation of Trader $A$'s profit when that profit is expressed in units of the money-market account at time $T$, which in a deterministic interest-rate system is given by $(P_{0T})^{-1}$. Armed with  \eqref{value at time 0}, we are now in a position to say more specifically in what sense the additional information possessed by Trader $A$ gives him a financial advantage.  To this end, let us say that a trading model is \textit{non-trivial} if the probability of a trade occurring over a given trading session is non-vanishing. Note that we do not require that a trade should definitely occur, merely that it might occur. Since the pricing measure and the physical measure share the same null sets, it follows that a trading model is non-trivial if and only if the probability of a trade occurring in any given trading session is non-zero under the measure $\mathbb P$. Then we have: 
\begin{Proposition}
The value of Trader A's position under Scenario 1 is strictly positive in any non-trivial trading model. 
\end{Proposition}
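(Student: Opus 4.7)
The plan is to evaluate $H^A_0 = P_{0T}\,\mathbb{E}[H^A_T]$ by first using the tower property to eliminate the random cash flow $X$, and then to factor the resulting integrand in such a way that its non-negativity is manifest, with strict positivity guaranteed by the multiplicative spread $\phi>1$ together with the non-triviality assumption.

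The first key observation is that since $\mathcal{F}^B_t \subset \mathcal{F}^A_t$, both $S^A_t$ and $S^B_t$ are $\mathcal{F}^A_t$-measurable, and hence the trading events $\Omega_s := \{S^{A+}_t \le S^{B-}_t\}$ and $\Omega_b := \{S^{A-}_t \ge S^{B+}_t\}$, which depend only on $S^A_t$ and $S^B_t$, are themselves in $\mathcal{F}^A_t$. Writing out the payoff and applying the tower property,
\begin{equation}
\mathbb{E}\!\left[X\,\mathds{1}_{\Omega_s}\right] = \mathbb{E}\!\left[\mathbb{E}[X\mid \mathcal{F}^A_t]\,\mathds{1}_{\Omega_s}\right] = (P_{tT})^{-1}\mathbb{E}\!\left[S^A_t\,\mathds{1}_{\Omega_s}\right],
\end{equation}
and similarly on $\Omega_b$. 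Substituting these into the expression for $H^A_T$ and collecting terms, I would obtain
\begin{equation}
\tfrac{P_{tT}}{P_{0T}}\,H^A_0 = \mathbb{E}\!\left[(S^A_t)^{1/2}\bigl((S^B_t)^{1/2}-(S^A_t)^{1/2}\bigr)\mathds{1}_{\Omega_s}\right] + \mathbb{E}\!\left[(S^A_t)^{1/2}\bigl((S^A_t)^{1/2}-(S^B_t)^{1/2}\bigr)\mathds{1}_{\Omega_b}\right].
\end{equation}

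The second step is to exploit the spread conditions. On $\Omega_s$ we have $\phi\,S^A_t \le \phi^{-1} S^B_t$, i.e.\ $S^B_t \ge \phi^2\,S^A_t$, so $(S^B_t)^{1/2}-(S^A_t)^{1/2} \ge (\phi-1)(S^A_t)^{1/2}$; on $\Omega_b$ we have $S^A_t \ge \phi^2\,S^B_t$, so $(S^A_t)^{1/2}-(S^B_t)^{1/2} \ge (\phi-1)(S^B_t)^{1/2}$. Substituting these bounds gives
\begin{equation}
\tfrac{P_{tT}}{P_{0T}}\,H^A_0 \;\ge\; (\phi-1)\,\mathbb{E}\!\left[S^A_t\,\mathds{1}_{\Omega_s}\right] + (\phi-1)\,\mathbb{E}\!\left[(S^A_t S^B_t)^{1/2}\,\mathds{1}_{\Omega_b}\right],
\end{equation}
which shows $H^A_0 \ge 0$ and reduces the problem to showing that the right-hand side is strictly positive.

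For the strict inequality, I would argue that $S^A_t > 0$ almost surely for $t<T$: inspection of the explicit formula \eqref{price} (or its multi-factor generalization in the tier-$n$ case) shows that the numerator $\int x\,\exp(\cdots)\,\mu_X(\diff x)$ vanishes only when $\mu_X$ is the Dirac mass at $0$, i.e.\ when $X \equiv 0$; this degenerate case is excluded implicitly by non-triviality since then no profitable trade can ever occur. Consequently, on the trading set $\Omega_s \cup \Omega_b$ the $\mathcal{F}^A_t$-measurable random variable $S^A_t$ (and hence also $(S^A_t S^B_t)^{1/2}$) is strictly positive almost surely. By the definition of non-triviality, $\mathbb{P}(\Omega_s \cup \Omega_b) > 0$, so at least one of the two expectations in the displayed lower bound is strictly positive, and since $\phi>1$, $P_{0T}>0$ and $P_{tT}>0$, we conclude $H^A_0 > 0$.

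The main obstacle, I expect, is the strict (as opposed to weak) positivity: the inequality chain gives $H^A_0 \ge 0$ quite directly from the tower property and the spread condition, but to rule out that the non-trivial trade event is nonetheless supported on $\{S^A_t = 0\}$ requires invoking the structure of the information-based pricing formula to ensure $S^A_t > 0$ a.s.\ whenever $X$ is not identically zero. Once that point is settled, the rest is bookkeeping.
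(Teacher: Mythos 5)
Your proof is correct and follows essentially the same route as the paper's: both arguments rest on the $\mathcal F^A_t$-measurability of $S^B_t$ and of the two trade indicators, the tower property to replace $X$ by $(P_{tT})^{-1}S^A_t$, and the spread condition to extract a factor of $(\phi-1)$; your square-root factorization is just an equivalent rearrangement of the paper's pathwise bounds $(S^A_t S^B_t)^{1/2}\geq \phi\, S^A_t$ on the sell event and $(S^A_t S^B_t)^{1/2}\leq \phi^{-1}S^A_t$ on the buy event (the latter giving the paper a marginally sharper coefficient $(1-\phi^{-1})S^A_t$ where you get $(\phi-1)(S^A_tS^B_t)^{1/2}$, which is immaterial). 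Your explicit verification that $S^A_t>0$ almost surely for $t<T$ — via the strict positivity of the exponential kernel in the pricing formula unless $X\equiv 0$ — addresses a point the paper's proof leaves implicit, so the only quibble is your parenthetical claim that $X\equiv 0$ is ruled out by non-triviality (in that degenerate case the spreads cross trivially at zero, so a ``trade'' does occur; the case must simply be excluded as an empty market), which does not affect the substance of the argument.
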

\begin{proof} If $A$ sells then  $S^{A+}_t \leq S^{B-}_t$ and hence $(S^{A}_t S^{B}_t)^{1/2} = (S^{A+}_t S^{B-}_t)^{1/2} \geq S^{A+}_t = \phi\, S^{A}_t $ ; whereas if $A$ buys then $S^{A-}_t \geq  S^{B+}_t$ and hence $(S^{A}_t S^{B}_t)^{1/2} = (S^{A-}_t S^{B+}_t)^{1/2} \leq S^{A-}_t = \phi^{-1} S^{A}_t  $. It follows that
\begin{equation} \label{inequality on payoff}
 H^A_T \geq \left [(P_{tT})^{-1}\phi \,S^{A}_t  - X \right] \mathds {1}(S^{A+}_t \leq  S^{B-}_t) + 
  \left [ X - (P_{tT})^{-1}\phi^{-1}S^{A}_t  \right] \mathds {1}(S^{A-}_t \geq S^{B+}_t) \, .
\end{equation}
Then if we take into account the fact that $\mathcal F^B_t \subset \mathcal F^A_t$ and hence that $S^{B}_t$ is 
$\mathcal F^A_t$-measurable, as are the two indicator functions in the expression above, it follows by use of \eqref{eq: S^A}, \eqref{eq: S^B}, \eqref{value at time 0}, \eqref{inequality on payoff} and the tower property that
\begin{equation}\label{eq: strictly positive bound}
H^A_0 \geq P_{0t} \,  \mathbb{E}\left[ (\phi - 1) S^{A}_t \mathds {1}(S^{A+}_t \leq S^{B-}_t) + (1 - \phi ^{- 1})S^{A}_t \mathds {1}(S^{A-}_t \geq S^{B+}_t) \right] ,
\end{equation}
which gives us a strictly positive lower bound on the value of Trader $A$'s position, providing that at least one of the two indicator functions takes the value unity on a set that is not of measure zero, which is the condition that the model is non-trivial.
\end{proof} 
The idea of the proof of Proposition 4 is deceptively simple. We only require that $S^{B}_t$ should be $\mathcal F^A_t$-measurable. If the information available to $A$ is marginally better than that available to $B$, then their mid-prices will not tend to diverge much, and the probability of a trade occurring will be relatively low; whereas if the information available to $A$ is of significantly better quality, then the prices will tend more to diverge, and hence the probability of a trade occurring will be higher, and the price at which the trade occurs will tend to be more advantageous to $A$ than $B$. It should be emphasized that in setting the trading time to be a fixed time $t$ we are not suggesting that such a scenario is representative of the way in which real trades are carried out. The point is rather that under that simplifying assumption we are able to identify how Trader $A$ can take advantage of his informational superiority. By stripping away some of the clutter associated with more realistic trading scenarios, we can focus on the mathematical principle that allows the informationally superior trader to succeed. Of course, we must check that the advantage is not spoiled in the setting of more elaborate trading scenarios, and this is what we do in subsequent examples. We observe that the value of Trader $A$'s position at the beginning of the trading session is strictly positive. This means that he can in principle {\it monetize} his advantage by selling off some or all of his stake in the eventual outcome of the trading session to a third party at the market price given by \eqref{value at time 0}. The fact that such a monetization is feasible is what we mean by statistical arbitrage. There is no guarantee that in any particular trading session Trader $A$ will prevail. 
\vspace{.10cm}

\noindent {\bf Scenario 2}.
Continuing with a deterministic trading time, we remark that in the situation where $A$ is a tier-1 trader and $B$ is a tier-0 trader (only having knowledge of the \emph {a priori} distribution of $X$) one can work out the lower bound on Trader $A$'s position exactly if we use the model of Section 2 when the payout is that of a defaultable discount bond, thus allowing for an explicit realization of the effects of the spread and the flow rate on the profitability of the informed trader's market activity. 
In the case of a defaultable discount bond with no recovery we assume that $X$ takes the values $1$ (no default) and $0$ (default) with probabilities $p$ and $1-p$ respectively. Then the mid-prices made by the two traders are given by
\begin{equation}\label{eq: midprices}
    S^A_t = \mathds{1}(t<T) \,P_{tT}\,\frac{p\,\re^{\left(\sigma\,\xi_t - \frac{1}{2}\,\sigma^2\,t\right)\,\frac{T}{T-t}}}
    {p\,\re^{\left(\sigma\,\xi_t - \frac{1}{2}\,\sigma^2\,t\right)\,\frac{T}{T-t}} + (1-p)}\,, \quad S^B_t = \mathds{1}(t<T) \,P_{tT}\,p \, .
\end{equation}
A straightforward calculation shows that
\begin{equation} \label{eq: first indicator}
\mathds {1}(S^{A+}_t \leq  S^{B-}_t) = \mathds {1}\left( \xi_t \leq \frac{\,T-t}{\sigma T} \log \frac {1-p} {\phi^2 - p} + \half \sigma t \right) 
\end{equation}
and
\begin{equation}\label{eq: second indicator}
\mathds {1}(S^{A-}_t \geq  S^{B+}_t) = \mathds {1}\left( \xi_t \geq \frac{\,T-t}{\sigma T} \log \frac {\, \phi^2(1-p)} {1 - \phi^2 p\,} + \half \sigma t \right) .
\end{equation}
The indicator function  in \eqref{eq: second indicator} is able to take the value unity only if $ \phi^2 p < 1$. This is because under Scenario 2 a trade can take place where Trader $A$ is a buyer only if this condition is satisfied. Otherwise Trader $A$ can only be a seller. In the following we let $N: \mathbb R \to [0,1]$ denote the standard normal distribution function. 

%
\begin{Proposition}
The lower bound on the value of Trader $A$'s position under Scenario 2 is 
\begin{align} \label{eq: exact bound} 
H^A_0  \geq   
\,p \, P_{0T}\, &(\phi - 1) \,N \left[  \frac{1}{\sigma} \sqrt{ \frac{T-t}{t\,T} } \log \left(\frac{1-p}{\phi^2 - p}\right) - \half \sigma \sqrt{ \frac{T\,t}{T-t} } \,\right]
\nonumber\\
& + p \, P_{0T}\, (1 - \phi^{-1}) \,N \left[  \frac{1}{\sigma} \sqrt{ \frac{T-t}{t\,T} } \log \left(\frac{1-\phi^2p}{\phi^2( 1- p)}\right) + \half \sigma \sqrt{ \frac{T\,t}{T-t} } \, \right] .
\end{align}
\end{Proposition}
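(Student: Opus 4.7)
The plan is to start from the strictly positive lower bound established in the previous proposition, namely
\[
H^A_0 \geq P_{0t}\,\mathbb{E}\!\left[(\phi-1)\,S^{A}_t\,\mathds{1}(S^{A+}_t \leq S^{B-}_t) + (1-\phi^{-1})\,S^{A}_t\,\mathds{1}(S^{A-}_t \geq S^{B+}_t)\right],
\]
and to evaluate each of the two expectations explicitly in the binary-payoff setting. First I would invoke the tower property: since the two indicators are $\mathcal{F}^A_t$-measurable (they depend only on $\xi_t$) and $S^{A}_t = P_{tT}\,\mathbb{E}[X \mid \mathcal{F}^A_t]$, every term of the form $\mathbb{E}[S^{A}_t\,\mathds{1}_E]$ with $E\in\mathcal{F}^A_t$ reduces to $P_{tT}\,\mathbb{E}[X\,\mathds{1}_E]$. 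Combined with the $P_{0t}$ factor out front, this produces an overall $P_{0T}$. Since $X$ is Bernoulli with $\mathbb{P}(X=1) = p$, the surviving expectation collapses to $\mathbb{E}[X\,\mathds{1}_E] = p\,\mathbb{P}(E \mid X=1)$.

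Next I would use the explicit threshold descriptions \eqref{eq: first indicator} and \eqref{eq: second indicator} to recast the two trading events as $\{\xi_t \leq c_1\}$ and $\{\xi_t \geq c_2\}$, with $c_1$ and $c_2$ read directly off those equations. Conditionally on $X=1$ we have $\xi_t = \sigma t + \beta_{tT}$, so $\xi_t \mid \{X=1\}$ is Gaussian with mean $\sigma t$ and variance $t(T-t)/T$. Consequently the two conditional probabilities become standard normal evaluations,
\[
\mathbb{P}(\xi_t \leq c_1 \mid X=1) = N\!\left(\frac{c_1 - \sigma t}{\sqrt{t(T-t)/T}}\right), \qquad
\mathbb{P}(\xi_t \geq c_2 \mid X=1) = N\!\left(\frac{\sigma t - c_2}{\sqrt{t(T-t)/T}}\right),
\]
and assembling them with the $p\,P_{0T}$ prefactor produces an expression of precisely the shape of \eqref{eq: exact bound}.

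The only non-trivial work, and the main obstacle, is the algebraic bookkeeping required to reduce the arguments $(c_1 - \sigma t)/\sqrt{t(T-t)/T}$ and $(\sigma t - c_2)/\sqrt{t(T-t)/T}$ into the compact form $\sigma^{-1}\sqrt{(T-t)/(tT)}\,\log(\cdot) \mp \tfrac{1}{2}\sigma\sqrt{tT/(T-t)}$ that appears on the right-hand side of \eqref{eq: exact bound}; this is routine but delicate, since the logarithm pre-factor $(T-t)/(\sigma T)$ must combine with the denominator to yield the square-root factor cleanly, and the $\tfrac{1}{2}\sigma t$ offset in each $c_i$ must recombine with the $\sigma t$ conditional mean to produce the stated $\tfrac{1}{2}\sigma\sqrt{tT/(T-t)}$ term with the correct sign. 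As a consistency check, in the regime $\phi^2 p \geq 1$ the second indicator is identically zero (Trader $A$ is never a buyer), which matches the formula because then $\log[(1-\phi^2 p)/(\phi^2(1-p))] = -\infty$ and the corresponding $N$-term vanishes.
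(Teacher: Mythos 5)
Your proposal is correct, and I verified the algebra: the arguments $(c_1-\sigma t)/\sqrt{t(T-t)/T}$ and $(\sigma t-c_2)/\sqrt{t(T-t)/T}$ do reduce exactly to the two expressions appearing in \eqref{eq: exact bound}, with the signs of the $\half\sigma\sqrt{tT/(T-t)}$ offsets coming out as stated. However, you take a genuinely different route from the paper. The paper substitutes the explicit ratio form of $S^A_t$ from \eqref{eq: midprices} into \eqref{eq: strictly positive bound} and is then faced with expectations of the form
\begin{equation*}
\mathbb{E}\left[\frac{p\,\re^{\theta_t}}{p\,\re^{\theta_t}+(1-p)}\,\mathds{1}(\cdot)\right],
\qquad \theta_t = \left(\sigma\,\xi_t - \tfrac{1}{2}\,\sigma^2\,t\right)\tfrac{T}{T-t}\,,
\end{equation*}
which it evaluates by recognizing $\Phi_t = \left(p\,\re^{\theta_t}+(1-p)\right)^{-1}$ as a unit-initialized martingale and changing to the bridge measure $\mathbb{P}^*$, under which $\xi_t$ is Gaussian with mean zero and variance $t(T-t)/T$; the remaining tilted Gaussian integral $\mathbb{E}^*[\re^{\theta_t}\,\mathds{1}(\cdot)]$ then yields the normal distribution functions. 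You short-circuit all of this: the tower property collapses $\mathbb{E}[S^A_t\,\mathds{1}_E]$ to $P_{tT}\,\mathbb{E}[X\,\mathds{1}_E] = P_{tT}\,p\,\mathbb{P}(E \mid X=1)$, and the conditional law $\xi_t \mid \{X=1\} \sim N(\sigma t,\, t(T-t)/T)$ gives the answer as a plain Gaussian probability with no exponential tilting. The two computations are in fact equivalent — tilting the bridge measure by $\re^{\theta_t}$ is precisely conditioning on $X=1$, as one sees by checking that the tilt shifts the mean by $\lambda v = \sigma t$ and that $\re^{\lambda^2 v/2}$ cancels the $\re^{-\frac{1}{2}\sigma^2 t\,T/(T-t)}$ prefactor — but your argument is more elementary and better adapted to the binary payoff, whereas the paper's bridge-measure technique is heavier machinery that pays off for general payoff distributions $\mu_X$, where no single conditioning event is available. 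One small caveat: your consistency check for the regime $\phi^2 p \geq 1$ is clean only at the boundary $\phi^2 p = 1$, where the logarithm is $-\infty$ and the $N$-term vanishes; for $\phi^2 p > 1$ the argument of the logarithm is negative and the formula must simply be read with that term deleted, which is the convention the paper adopts implicitly when it notes that Trader $A$ can then only be a seller.
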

%
\begin{proof}
We need to work out the right hand side of \eqref{eq: strictly positive bound} when we substitute \eqref {eq: midprices},  \eqref{eq: first indicator} and  \eqref{eq: second indicator} into it. This gives
\begin{align}\label{eq: H_0 prop 5}
& H^A_0   \geq  
\,p \, P_{0T}\, (\phi - 1) \, \mathbb{E}\left[ \, \frac{\,\re^{\left(\sigma\,\xi_t - \frac{1}{2}\,\sigma^2\,t\right)\,\frac{T}{T-t}}}
    {p\,\re^{\left(\sigma\,\xi_t - \frac{1}{2}\,\sigma^2\,t\right)\,\frac{T}{T-t}} + (1-p)}\, \mathds {1}\left( \xi_t \leq \frac{\,T-t}{\sigma T} \log \frac {1-p} {\phi^2 - p} + \half \sigma t \right) \right]
\nonumber\\
& \, \, \,+ p \, P_{0T}\, (1 - \phi^{-1}) \, \mathbb{E}\left[ \,\frac{\,\re^{\left(\sigma\,\xi_t - \frac{1}{2}\,\sigma^2\,t\right)\,\frac{T}{T-t}}}
    {p\,\re^{\left(\sigma\,\xi_t - \frac{1}{2}\,\sigma^2\,t\right)\,\frac{T}{T-t}} + (1-p)}\, \mathds {1}\left( \xi_t \geq \frac{\,T-t}{\sigma T} \log \frac {\, \phi^2(1-p)} {1 - \phi^2 p\,} + \half \sigma t \right) \, \right]  .
\end{align}
Here we have used the fact that $P_{0t}\, P_{tT}=P_{0T}$ in a deterministic interest system. It may not be obvious how to take the expectations in \eqref{eq: H_0 prop 5}, but these can be worked out in closed form. The trick is to observe that the process $\{\Phi_t\}_{0 \leq t < T}$ defined by
\begin{equation}
\Phi_t = \frac{1} {p\,\exp{\left(\sigma\,\xi_t - \frac{1}{2}\,\sigma^2\,t\right) \frac{T}{T-t}} + (1-p)}
\end{equation}
is a unit-initialized martingale under $\mathbb P$ with respect to the filtration generated by $\{\xi_t\}_{0 \leq t < T}$, and hence can be used to change measure to a new measure  
$\mathbb P^*$ under which $\{\xi_t\}_{0 \leq t < T}$ is a 
Brownian bridge. In fact, $\mathbb P^*$ is an example of the so-called bridge measure introduced in reference \cite{BHM2007}. Thus for any event $A_t \in \mathcal F_t$ we define
\begin{equation}
\mathbb P^*(A_t ) = \mathbb E [\, \Phi_t \,\mathds 1 (A_t) \,]\, ,
\end{equation}
and we obtain
\begin{align}
& H^A_0   \geq  
\,p \, P_{0T}\, (\phi - 1) \, \mathbb{E}^*\left[ \, \re^{\left(\sigma\,\xi_t - \frac{1}{2}\,\sigma^2\,t\right)\,\frac{T}{T-t}}
    \, \mathds {1}\left( \xi_t \leq \frac{\,T-t}{\sigma T} \log \frac {1-p} {\phi^2 - p} + \half \sigma t \right) \,\right]
\nonumber\\
& \, \, \,+ p \, P_{0T}\, (1 - \phi^{-1}) \, \mathbb{E}^*\left[ \, \re^{\left(\sigma\,\xi_t - \frac{1}{2}\,\sigma^2\,t\right)\,\frac{T}{T-t}}
    \, \mathds {1}\left( \xi_t \geq \frac{\,T-t}{\sigma T} \log \frac {\,\phi^2(1-p)} {1 - \phi^2 p\,} + \half \sigma t \right) \, \right]  ,
\end{align}
where the expectation is now taken under $\mathbb P^*$. Finally, we use that fact that $\xi_t$ is a $\mathbb P^*$ normally distributed random variable with mean zero and variance $t(T-t)/T$, and the problem reduces to evaluating a Gaussian integral of a standard type, leading to \eqref{eq: exact bound}.
\end{proof} 

\section{Trading when the spreads cross} 
\label{sec: trading when the spreads cross}
\noindent {\bf Scenario 3}. We consider a setup like that of Scenario 1, but where the game master declares that a trade occurs  the first time the spreads cross. To make the ideas precise we need a few results from the theory of stopping times (see, e.g., \cite{Dothan1990, Protter2004, Cohen2015}). We recall that by a stopping time on a filtered probability space $(\Omega, \mcF, \{\mcF_t\}_{t\geq 0},\mbP)$ we mean a random variable $\tau$ taking values in $[0,\infty]$ with the property that $\{\tau\leq t\}\in\mcF_t$ for all $t\geq 0$. It follows that if $\tau_1$ and $\tau_2$ are $\{\mcF_t\}$ stopping times, then $\min(\tau_1,\tau_2)$ is also an $\{\mcF_t\}$ stopping time. This is because $\{\min(\tau_1,\tau_2) \leq t\} = \{\tau_1\leq t\}\cup \{\tau_2 \leq t\}$. 
We recall that in our model the filtration $\{\mathcal F^A_t\}$ accessed by Trader $A$ is strictly finer than the filtration $\{\mathcal F^B_t\}$ accessed by Trader $B$. Thus the mid-prices made both by Trader $A$ and Trader $B$ are adapted to  $\{\mathcal F^A_t\}$. Now, for any càdlàg process $\{Z_t\}_{t\geq 0}$ adapted to $\{\mcF_t\}_{t\geq 0}$, and for any Borel set $D\in\mathcal{B}[0,\infty]$ the random variable $\tau=\inf_{t\in[0,\infty]}(Z_t\in D)$ is an $\{\mcF_t\}$ stopping time. 
We can therefore introduce the following $\{\mathcal F^A_t\}$ stopping times: 
\begin{equation}
\tau^+ = \inf_{t \in [0, T]} (\phi^{-1}S^{A}_t \geq  \phi \,S^{B}_t), \quad \tau^- = \inf_{t \in [0, T]} (\phi\,S^{A}_t \leq  \phi^{-1}S^{B}_t)\,,
\end{equation}
and we set $\tau = \tau^+ \land  \tau^-$, the minimum of $\tau^+$ and  $\tau^-$. Here for any index set $\Lambda \subset \mbR$ and any collection 
$\{A_\lambda\}_{\lambda \in \Lambda}$ of elements of $\mathcal F$ we write $\inf_{\lambda \in \Lambda} \{A_\lambda\}$ for the random variable
$\inf \{ \lambda \in \Lambda : \mathds 1 (\omega \in A_\lambda) = 1\}$. We observe that $ \tau^+$ is the first time the bid price of $A$ hits the offer price of $B$, that $\tau^-$ is the first time the offer price of $A$ hits the bid price of $B$, and that $\tau$ is the first time the spreads cross. We recall that the prices made by both traders drop to zero at time $T$. Therefore if $\tau$ takes the value $T$ for some outcome of chance, then no trade has taken place in that outcome of chance. 

We assume that once a trade has been made then the position will be held until maturity and no further trades will take place. With this setup in mind, we see that the overall profit $H^A_T$ accruing to Trader $A$ at time $T$ in Scenario 3 is 
\begin{align}\label{eq: H^A_T scn3}
 H^A_T =  \left [ X - P_{\tau^+T}^{-1}\,\phi^{-1} S^{A}_{\tau^+} \right] \mathds 1(\tau^+ < \tau^-) + 
  \left [P_{\tau^-T}^{-1}\, \phi\,S^{A}_{\tau^-} - X \right]\mathds 1(\tau^- < \tau^+) .
\end{align}
As in Scenario 1, we can work out the value of Trader $A$'s position by use of the risk-neutral pricing relation \eqref{value at time 0}. For this purpose, we need a few additional mathematical tools. For any stopping time $\tau$ on a filtered probability space $(\Omega, \mcF, \{\mcF_t\}_{t\geq 0},\mbP)$, the so-called stopping-time $\sigma$-algebra $\mcF_{\tau}$ is defined by 
\begin{equation}\label{stopping time sigma algebra}
\mcF_{\tau} = \left\{ A\in \mcF \,:\, A\cap \{\tau \leq t\}\in\mcF_t \,,\, \forall t\geq 0 \right\}\,.
\end{equation}
Given a pair of stopping times $\tau_1$ and $\tau_2$, one can show that  if $\tau_1<\tau_2$ then $\mcF_{\tau_1}\subset \mcF_{\tau_2}$. The following is fundamental: 
%
%
\begin{Lemma} \label{Lemma 1}
If $\alpha$ and $\beta$ are stopping times based on the filtration $\{\mcF_t \}_{t \geq 0}$, then each of the events $\alpha < \beta$, $\alpha \leq \beta$, $\alpha = \beta$, $\alpha \geq \beta$, $\alpha >\beta$ belongs to $\mcF_{\alpha}$ and to $\mcF_{\beta}$\,. 
\end{Lemma}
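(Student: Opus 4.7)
The plan is to reduce everything to the single containment $\{\alpha\leq\beta\}\in\mcF_{\alpha}\cap\mcF_{\beta}$, and then obtain all five listed events by elementary set operations. Since stopping times are closed under the minimum operation and, more importantly, the truncations $\alpha\wedge t$ and $\beta\wedge t$ are $\mcF_t$-measurable random variables for every $t\geq 0$ (a standard consequence of the stopping time property applied to the sets $\{\alpha\wedge t\leq s\}$), the plan will hinge on rewriting each event on the trace down to $\mcF_t$ in terms of these truncations.

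First I would verify $\{\alpha\leq\beta\}\in\mcF_{\beta}$ directly from the definition \eqref{stopping time sigma algebra}. The key identity is
\begin{equation}
\{\alpha\leq\beta\}\cap\{\beta\leq t\}=\{\alpha\wedge t\leq\beta\wedge t\}\cap\{\beta\leq t\},
\end{equation}
which holds because on $\{\beta\leq t\}$ we have $\beta\wedge t=\beta$, so $\alpha\leq\beta$ forces $\alpha\leq t$ and hence $\alpha\wedge t=\alpha$; the converse is immediate. Both factors on the right-hand side belong to $\mcF_t$, the first because $\alpha\wedge t$ and $\beta\wedge t$ are $\mcF_t$-measurable, the second because $\beta$ is a stopping time. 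The analogous calculation
\begin{equation}
\{\alpha\leq\beta\}\cap\{\alpha\leq t\}=\{\alpha\wedge t\leq\beta\wedge t\}\cap\{\alpha\leq t\}
\end{equation}
shows $\{\alpha\leq\beta\}\in\mcF_{\alpha}$ by the same reasoning.

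Exchanging the roles of $\alpha$ and $\beta$ gives $\{\beta\leq\alpha\}\in\mcF_{\alpha}\cap\mcF_{\beta}$. From this I would read off
\begin{equation}
\{\alpha=\beta\}=\{\alpha\leq\beta\}\cap\{\beta\leq\alpha\},\qquad \{\alpha<\beta\}=\{\alpha\leq\beta\}\setminus\{\alpha=\beta\},
\end{equation}
together with the analogous expressions for $\{\alpha\geq\beta\}$ and $\{\alpha>\beta\}$. Since each $\mcF_{\tau}$ is closed under finite intersection, complementation relative to $\Omega$, and set difference, all five events land in $\mcF_{\alpha}\cap\mcF_{\beta}$, which is exactly the claim.

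The only real subtlety is the first step: one must resist the temptation to write $\{\alpha\leq\beta\}$ as a countable union over rationals, which is the usual trick but gives membership in $\mcF_{\infty}$ rather than in the stopping-time $\sigma$-algebras. The truncation identity above is what forces the set into $\mcF_t$ on the proper trace, and this is the step I would expect to double-check most carefully; once it is in hand, the rest is a short exercise in Boolean algebra.
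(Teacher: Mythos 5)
Your overall strategy---reducing everything to $\{\alpha\leq\beta\}\in\mcF_{\alpha}\cap\mcF_{\beta}$ via the $\mcF_t$-measurability of the truncations $\alpha\wedge t$ and $\beta\wedge t$, then generating the other four events by Boolean operations inside the stopping-time $\sigma$-algebras---is sound, and it is genuinely different from the paper's route, which instead decomposes the strict event as $\{\alpha<\beta\}=\bigcup_{q\in\mathbb{Q}}\{\alpha<q\}\cap\{q<\beta\}$ and intersects with $\{\beta\leq t\}$ term by term. However, your first displayed identity is false as stated: in general
\[
\{\alpha\leq\beta\}\cap\{\beta\leq t\}\neq\{\alpha\wedge t\leq\beta\wedge t\}\cap\{\beta\leq t\}.
\]
Take an outcome with $\beta(\omega)=t$ and $\alpha(\omega)>t$: then $\alpha\wedge t=t=\beta\wedge t$ and $\beta\leq t$, so $\omega$ belongs to your right-hand side, yet $\alpha>\beta$, so it is absent from the left. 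The phrase ``the converse is immediate'' is exactly where this hides: from $\alpha\wedge t\leq\beta\wedge t$ and $\beta\leq t$ you cannot conclude $\alpha\leq t$, hence cannot replace $\alpha\wedge t$ by $\alpha$. The repair is a one-liner---intersect with $\{\alpha\leq t\}\in\mcF_t$:
\[
\{\alpha\leq\beta\}\cap\{\beta\leq t\}=\{\alpha\wedge t\leq\beta\wedge t\}\cap\{\alpha\leq t\}\cap\{\beta\leq t\},
\]
whose three factors all lie in $\mcF_t$. Notably, your second identity, $\{\alpha\leq\beta\}\cap\{\alpha\leq t\}=\{\alpha\wedge t\leq\beta\wedge t\}\cap\{\alpha\leq t\}$, \emph{is} correct as written (on $\{\alpha\leq t\}$ one has $\alpha\wedge t=\alpha\leq\beta\wedge t\leq\beta$), so the asymmetry between the two traces is real: the trace on $\{\beta\leq t\}$ needs the extra factor, the trace on $\{\alpha\leq t\}$ does not.

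Once that identity is repaired, the remainder of your argument (symmetry to obtain $\{\beta\leq\alpha\}$, then intersections, complements, and differences within the $\sigma$-algebras $\mcF_{\alpha}$ and $\mcF_{\beta}$) goes through. One further correction: your closing remark mischaracterizes the rational-union method, which is precisely what the paper uses---and uses correctly. Applied on the trace, as the paper does, one has
\[
\{\alpha<\beta\}\cap\{\beta\leq t\}=\bigcup_{q\in\mathbb{Q}}\{\alpha<q\leq t\}\cap\{q<\beta\leq t\},
\]
and each term visibly lies in $\mcF_t$, so the countable union lands the event in $\mcF_{\beta}$ directly, not merely in $\mcF_{\infty}$. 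In short: the paper starts from the strict inequality, where rationals interpolate, and derives the rest by complementation and symmetry; you start from the weak inequality via truncations. Both are legitimate routes, but as written your proof contains the only genuine error between the two, albeit one with a short fix.
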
 
\begin{proof} Let us demonstrate that $\{\alpha < \beta \} \in \mcF_{\beta}$. The other cases then follow analogously. Thus by \eqref{stopping time sigma algebra} we need to show that
$\{\alpha < \beta \}\cap \{\beta < t\}\in\mcF_t \, , \, \forall t\geq 0 $\,. Now, 
\begin{equation}
\{\alpha < \beta \} = \bigcup_{q \in \mathbb Q} \{\alpha < q\} \cap \{ q < \beta \} \, ,
\end{equation}
and hence for any $t \geq 0$ it holds that
\begin{equation}
\{\alpha < \beta \}\cap \{\beta \leq t\} = \bigcup_{q \in \mathbb Q} \{\alpha < q \leq t\} \cap \{ q < \beta  \leq t\} \, ,
\end{equation}
for $q$ rational. But each of the events $\{\alpha < q \leq t\}$ and $\{ q < \beta  \leq t\}$ is in $\mcF_t$\,. Thus we have expressed $\{\alpha < \beta \}\cap \{\beta \leq t\} $ as a countable union of sets in $\mcF_t$.
\end{proof}
Continuing with our discussion of stopping times, we recall that a collection $\mcC$ of random variables on a probability space $(\Omega, \mcF, \mbP)$  is said to be uniformly integrable (UI) if given $\epsilon >0$ $\exists \,\delta\in[0,\infty)$ such that $\mbE\left[|Z|\,\mathds 1 (|Z|>\delta)\right]<\epsilon$, $\forall Z\in\mcC$. A martingale $\{M_t\}_{t\geq 0}$ on a filtered probability space $(\Omega, \mcF, \{\mcF_t\}_{t\geq 0},\mbP)$ is said to be closed by a random variable $Y$ if $\mbE[|Y|]<\infty$ and $M_t = \mbE[Y\,|\, \mcF_t]$ for $0\leq t <\infty$. If a right-continuous martingale $\{M_t\}$ is  UI, then $M_\infty=\lim_{t\to \infty} M_t$ exists almost surely, $\mbE[|M_\infty|]<\infty$, and $M_\infty$ closes $\{M_t\}$.
In what follows, we need the optional sampling theorem, which states that if a right-continuous martingale $(M_t)$ is closed by a random variable $M_\infty$, and if $\tau_1$ and $\tau_2$ are stopping times such that $\tau_1\leq \tau_2$ almost surely, then $M_{\tau_1}$ and $M_{\tau_2}$ are integrable and 
\begin{equation}
\mbE\left[M_{\tau_2}\mid \mcF_{\tau_1}\right]= M_{\tau_1}\,.
\end{equation}
This relation holds in particular in the case of a UI martingale. These results can be applied as follows: 
%
\begin{Proposition}
The value of Trader A's position under Scenario 3 is strictly positive in any non-trivial trading model. 
\end{Proposition}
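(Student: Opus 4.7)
The plan is to peel $H^A_T$ apart into its two pieces, condition on the appropriate stopping-time $\sigma$-algebras, and use optional sampling to replace $X$ by $A$'s mid-price evaluated at the stopping time. First, observe that $N_t := \mbE[X\mid \mcF^A_t]$ is a càdlàg $\{\mcF^A_t\}$-martingale closed by the integrable random variable $X$, hence uniformly integrable, and that for $t<T$ it coincides with $P_{tT}^{-1}S^A_t$. The stopping times $\tau^+$ and $\tau^-$ are $\{\mcF^A_t\}$-stopping times (as first hitting times of $\{\mcF^A_t\}$-adapted càdlàg processes) and are bounded by $T$, so optional sampling yields
\begin{equation}
\mbE[X \mid \mcF^A_{\tau^+}] = N_{\tau^+} = P_{\tau^+T}^{-1}S^A_{\tau^+}, \qquad \mbE[X \mid \mcF^A_{\tau^-}] = N_{\tau^-} = P_{\tau^-T}^{-1}S^A_{\tau^-},
\end{equation}
with the conventions at $\{\tau^\pm = T\}$ harmless because $S^A_T=0$ there and the corresponding indicator in \eqref{eq: H^A_T scn3} will have zero contribution.

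Next, I would invoke Lemma~\ref{Lemma 1} with $\alpha=\tau^+$, $\beta=\tau^-$ to place the event $\{\tau^+<\tau^-\}$ in $\mcF^A_{\tau^+}$ and the event $\{\tau^-<\tau^+\}$ in $\mcF^A_{\tau^-}$. Since $S^A_{\tau^\pm}$ is measurable with respect to the corresponding stopping-time $\sigma$-algebra, the tower property applied term by term in \eqref{eq: H^A_T scn3} gives
\begin{equation}
\mbE[H^A_T] = (1-\phi^{-1})\,\mbE\!\left[P_{\tau^+T}^{-1}S^A_{\tau^+}\mathds{1}(\tau^+<\tau^-)\right] + (\phi-1)\,\mbE\!\left[P_{\tau^-T}^{-1}S^A_{\tau^-}\mathds{1}(\tau^-<\tau^+)\right].
\end{equation}
Multiplying by $P_{0T}$ produces a representation of $H^A_0$ as a sum of two manifestly non-negative quantities, each scaled by a strictly positive constant because $\phi>1$.

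The final step is to upgrade $\geq 0$ to $>0$. Non-triviality means $\mbP(\tau<T)>0$, so at least one of the events $\{\tau^+<\tau^-\le T\}$ or $\{\tau^-<\tau^+\le T\}$ has positive probability. On either event the crossing condition forces $S^A_\tau \ge \phi^2 S^B_\tau$ or $S^A_\tau \le \phi^{-2}S^B_\tau$; in either case $S^A_\tau$ is positive whenever $S^B_\tau$ is. Since the information-based price formula \eqref{price} is a ratio of strictly positive likelihood-weighted integrals whenever $\mbE[X]>0$ (which is implicit in the existence of any meaningful trade), one has $S^B_t>0$ almost surely for $t<T$, so the relevant expectation is strictly positive.

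The main obstacle is the careful verification of the optional-sampling step together with the measurability of the indicators, which is why Lemma~\ref{Lemma 1} is marshalled; the strict-positivity argument at the end is then essentially a bookkeeping exercise using non-triviality and the strict positivity of the Bayesian weights in \eqref{price}.
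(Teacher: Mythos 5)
Your proof is correct and follows essentially the same route as the paper: your martingale $N_t=\mbE\left[X\given\mcF^A_t\right]$ is the paper's deflated gain process up to the constant factor $P_{0T}$, and your chain of optional sampling, Lemma~\ref{Lemma 1}, and the tower property yields exactly the paper's representation \eqref{value of scenario 2 at time 0} of $H^A_0$ as two non-negative terms with strictly positive coefficients $1-\phi^{-1}$ and $\phi-1$, concluded by non-triviality. One minor slip in your closing paragraph: on $\{\tau^-<\tau^+\}$ the crossing condition gives $S^A_{\tau^-}\leq\phi^{-2}S^B_{\tau^-}$, which is an \emph{upper} bound, so positivity of $S^A_{\tau^-}$ cannot be inherited from $S^B_{\tau^-}$; the repair is immediate (apply the strict positivity of the information-based price, as in \eqref{price} or \eqref{eq: price 1:n}, directly to Trader $A$'s own mid-price, which is itself an information-based price driven by the effective information), and note that this whole step goes beyond what the paper's proof records, since the paper stops at non-negativity of the two terms plus the non-triviality assumption.
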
 
\begin{proof}
\noindent We observe that if $(S_t)_{t\geq 0}$ is the price process of an asset that pays a single dividend $X$ at time $T$, then the so-called deflated gain process $(G_t)_{t\geq 0}$ defined by 
\begin{equation} \label{deflated gains}
G_t = P_{0t}\,S_t + P_{0T}\, \mathds 1 (t\geq T) \,X
\end{equation}
is a $\mbP$-martingale. 
The deflated gain process is obtained by taking the current value of the asset, expressed in units of the money market account, and adding to it the cumulative dividend process, where each dividend is expressed in units of the money-market account at the time the dividend is paid. In the case of a single dividend payment and a deterministic interest rate system, the result is given by \eqref{deflated gains}.  Thus, by  \eqref{eq: S^A} it holds, in fact, that
\begin{equation}
G_t = P_{0T} \, \mbE\left[X \given \mcF^A_t\right] ,
\end{equation}
and we see that the deflated gain process is a UI martingale under $\mathbb P$, closed by $P_{0T}\,X$. 
It follows by the optional sampling theorem that 
\begin{equation}
 {G}_{\tau^{\pm}} = \mbE\left[ G_T\given \mcF_{\tau^{\pm}} \right] 
\end{equation}
and hence
\begin{equation} \label{optional sampling}
(P_{\tau^{\pm}T})^{-1}\, {S}_{\tau^{\pm}} + X\,  \mathds 1 (\tau^\pm = T) = \mbE\left[ X \given \mcF_{\tau^{\pm}} \right] .
\end{equation}
Then, since $\mathds 1(\tau^+<\tau^-)$ is $\mcF_{\tau^+}$ measurable and $\mathds 1(\tau^-<\tau^+)$ is $\mcF_{\tau^-}$ measurable we can use  the tower property  alongside \eqref{eq: H^A_T scn3} and \eqref{optional sampling} to deduce that 
\begin{align} \label{value of scenario 2 at time 0}
 H^A_0 =&\, P_{0T} (1 - \phi^{-1}) \mathbb{E} \left [P_{\tau^+T}^{-1} \,S^{A}_{\tau^+}  \mathds 1(\tau^+ < \tau^-)\right] \nonumber\\
 &\quad\quad\quad + 
  P_{0T}  (\phi - 1) \mathbb{E} \left [P_{\tau^-T}^{-1}\,S^{A}_{\tau^-} \mathds 1(\tau^- < \tau^+)\right].
\end{align}
It should be evident that both of the terms on the right side of \eqref{value of scenario 2 at time 0} are non-negative. Then since by assumption
it holds that $\mathbb P  (\tau^+ < \tau^-) > 0$ or $ \mathbb P  (\tau^- < \tau^+) > 0$,  or equivalently $\mathbb P  (\tau < T) > 0$, we deduce that $H^A_0> 0$. 
\end{proof}

Now, the information-based model described in Section 2 based on Brownian bridge information is evidently non-trivial, so Proposition 6 is applicable, and we should be able to work out the profitability of Trader $A$ by use of simulation studies. In Figure \ref{fig: trading times}, we show the distribution of trading times over the interval $[0,T]$ under Scenario 3. The information flow rates for both information processes are set at unity. Charts are shown for four different values of the spread factor. 

One observes that for relatively low spreads, e.g.~$\phi = 1.02$, the bulk of the trades occur relatively early on in the trading session, whereas as the spread is increased to higher levels such as $\phi = 1.10$ the trades tend to take place later in the session. This is because it takes more time on average in the case of large spreads for the prices to diverge sufficiently for the spreads to cross. 
%
%
\begin{figure}[H]
    \centering
    \includegraphics[height=3.5in]{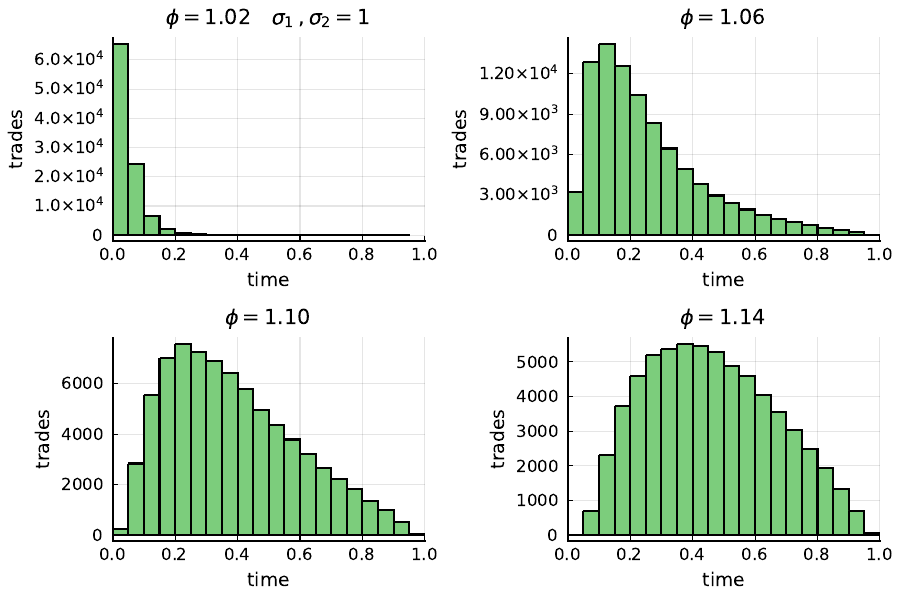}
        \caption{Distribution of trading times under Scenario 3 based on 100,000 simulations, including both buys and sells.
      In this case $r=0$ and $p=0.8$.}
    \label{fig: trading times}
\end{figure} 
In Figure 2 below we plot a heat map showing the average profits taken by Trader $A$ under Scenario 3 as a function of the information rate and the spread factor. We look both at average per-trade profits and average per-session profits. In the latter case, we allow for the fact that a trade may or may not actually occur in a given trading session. We consider 100,000 trading sessions in each case, using the same outcomes of chance. The average profit is plotted as a function of the spread ($x$-axis) and the information rate ($y$-axis). 

We note that the per-trade profits is an increasing function of  the spread factor. On the other hand, the per-session profit is for each value of the information flow rate a concave function of the spread factor, over the range of parameters considered. This allows us to conclude that from the perspective of the better informed trader there is an optimal market spread at which to be trading for any given level of the information flow rate, if the objective is to optimize the per-session profitability. 

These conclusions are of course based on a relatively simple trading model, but nevertheless give a useful qualitative picture of the interplay of information and market convention in the determination of trading profits. Our philosophy is not to present the most elaborate trading models possible, with bells and whistles, but rather the simplest versions of the models that illustrate the underlying mathematical principles that guarantee the success of the more well-informed trader. 
\vspace{.5cm}
%
%
\begin{figure}[H]
    \centering
    \includegraphics[height=2in]{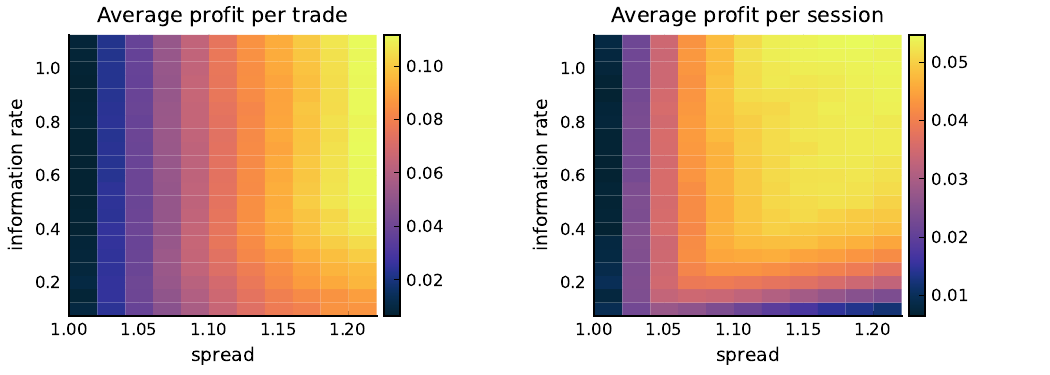}
        \caption{Heat chart of average profit as a function of the spread and the information flow rate. We look at per-trade profits on the left and per-session profits on the right. Here $r=0$ and $p=0.8$. }
    \label{fig: average profit}
\end{figure} 
%
%
\section{successive random trades} 
\label{sec: successive random trades}
\noindent {\bf Scenario 4}. In this scenario we consider the situation where trading occurs when spreads cross, as in Scenario 3, but where additionally prices are adjusted following each trade in such a way that the mid prices are equalized and trades occur when the spreads cross again. The setup is as follows. As before, we consider a contract that pays a  random dividend $X\geq 0$ at time $T$. Let 
$(S^A_t)$ and $(S^B_t)$ denote the mid-prices computed by Trader $A$ and Trader $B$ on the basis of the information they have gained, respectively,  from the information processes at their disposal, and let $\phi>1$ be the spread factor. The filtration available to Trader $B$ is assumed to be a strict sub-filtration of that available to Trader $A$.  Then we define a collection of $\{\mathcal F^A_t\}$ stopping times as follows. First we write
\begin{equation}
\tau_1^+ = \inf_{t \in [0, T]} (\phi^{-1}S^{A}_t \geq  \phi S^{B}_t)\,, \quad \tau_1^- = \inf_{t \in [0, T]} (\phi S^{A}_t \leq  \phi^{-1}S^{B}_t)\,,
\end{equation}
and set $\tau_1 = \tau_1^+ \land \tau_1^-$. Thus $\tau_1$ is the time at which the first trade occurs, with the understanding that $\tau_1=T$ corresponds to the situation where no trade takes place in the given trading session. Trader $A$ buys if $\tau_1^+<\tau_1^-$ and Trader $A$ sells if $\tau^-< \tau_1^+$. Next we introduce an indicator function $Q_1$ for Trader $A$ taking the values plus one, minus one, or zero, to tell us whether the trade was a buy or sell or if there was no trade. We set
\begin{equation}
Q_1 = \mathds 1(\tau_1^+ < \tau_1^-) - \mathds 1(\tau_1^- < \tau_1^+)\, .
\end{equation}
After a trade has taken place, the seller adjusts his mid-price up by one unit of the spread factor, whereas the buyer adjusts his price down by one unit of the spread factor. In this way, the two new mid-prices are equalized. One might envisage that the traders would adjust their prices by taking into account the additional information gained by knowledge of the fact that another trader must have made a price in such a way that the spreads would cross. But since neither trader has knowledge of the other's informational status, it is not so obvious how this could be achieved. We shall assume that it is simply a market convention that the prices are adjusted in line with the procedure stated. Thus each trader adjusts his mid-price to the price at which the trade just took place. This is not unreasonable. Once the prices have been equalized, the traders create bid and offer prices by multiplying or dividing the new mid-prices by a further unit of the spread factor. The game master then declares the time $\tau_2 = \tau_2^+ \land \tau_2^-$ at which the second trade occurs, where
\begin{equation}
\tau_2^+ = \inf_{t \in [\tau_1, T]} (\phi^{-Q_1-1}S^{A}_t \geq  \phi ^{Q_1+1}S^{B}_t)\,, 
\quad \tau_2^- = \inf_{t \in [\tau_1, T]} (\phi^{-Q_1 + 1} S^{A}_t \leq  \phi^{Q_1-1}S^{B}_t)\,,
\end{equation}
with the understanding that $\tau_2=T$ corresponds to the situation where there is no second trade. The profits made by Trader $A$ over the course of the first two trades are given in total by the following expression: 
\begin{align}
 &H^A_T = \left [ X - P_{\tau_1T}^{-1} \,\phi^{-1}S^{A}_{\tau_1}\, \right] \mathds 1(\tau_1^+ < \tau_1^-)  + 
  \left [P_{\tau_1T}^{-1} \,\phi \,S^{A}_{\tau_1} - X \right] \mathds 1(\tau_1^- < \tau_1^+)  \nonumber \\
  &\, \, \, +  \left [ X - P_{\tau_2T}^{-1}\, \phi^{-Q_1 -1}S^{A}_{\tau_2} \, \right] \mathds 1(\tau_2^+ < \tau_2^-) + 
 \left [P_{\tau_2T}^{-1} \,\phi ^{-Q_1 + 1}\,S^{A}_{\tau_2} - X \right] \mathds 1(\tau_2^- < \tau_2^+) \, .
\end{align}
By use of the tower property, the optional sampling theorem, and measurability properties of the stopping times involved, we deduce that the value of Trader $A$'s position is 
\begin{align}  
 H^A_0 = &P_{0T} (1 - \phi^{-1}) \mathbb{E} \left [P_{\tau_1T}^{-1} \,S^{A}_{\tau_1} \, \mathds 1(\tau_1^+ < \tau_1^-)\right]+ 
  P_{0T}  (\phi - 1) \mathbb{E} \left [P_{\tau_1T}^{-1}\,S^{A}_{\tau_1}\, \mathds 1(\tau_1^- < \tau_1^+)\right] \nonumber \\
  & + P_{0T} (1 - \phi^{-2} ) \, \mathbb{E} \left [P_{\tau_2T}^{-1} \,S^{A}_{\tau_2} \, \mathds 1(\tau_1^+ < \tau_1^-) \, \mathds 1 (\tau_2^+ < \tau_2^-)\right] \nonumber \\ 
  & \quad + P_{0T}  (\phi^{2} - 1) \,  \mathbb{E} \left [P_{\tau_2T}^{-1}\,S^{A}_{\tau_2}\,  \mathds 1(\tau_1^- < \tau_1^+) \, \mathds 1(\tau_2^- < \tau_2^+)\right] .
\end{align}
One notes that the terms involving a pair of opposite trades, that is to say, a buy followed by a sell, or a sell followed by a buy, generate no value for Trader $A$, and hence do not appear in the formula above. But the single trade positions, and the positions involving two buys or two sells,  have strictly positive value, assuming that the events in question take place with nonzero probability. Thus we arrive at: 
\vspace{.25cm} 
\begin{Proposition}
The value of Trader A's position under Scenario 4 is strictly positive in any non-trivial trading model. 
\end{Proposition}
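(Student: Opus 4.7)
The plan is to argue that the formula for $H^A_0$ displayed just above the proposition decomposes the value of Trader $A$'s position into four manifestly non-negative summands, at least one of which must be strictly positive under the non-triviality hypothesis. The proof is then essentially an immediate consequence of the derivation of that formula, so the bulk of the work would go into justifying the formula itself.

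To derive the formula I would mirror the argument used for Scenario 3 (Proposition 6). The deflated gain process $G_t = P_{0t}S^A_t + P_{0T}\II(t \geq T)X$ is a UI $\mbP$-martingale on the filtration $\{\mcF^A_t\}$ closed by $P_{0T}X$, so optional sampling at $\tau_1$ and $\tau_2$ yields $\mbE[X \mid \mcF^A_{\tau_i}] = P_{\tau_iT}^{-1}S^A_{\tau_i}$ on each trading event, the $X\II(\tau_i = T)$ correction dropping out on the support of the relevant indicators. By Lemma~\ref{Lemma 1} the indicators $\II(\tau_1^{\pm} < \tau_1^{\mp})$ are $\mcF^A_{\tau_1}$-measurable and $\II(\tau_2^{\pm} < \tau_2^{\mp})$ are $\mcF^A_{\tau_2}$-measurable, so the tower property permits us to replace each $X$ inside the expectation by the appropriate $P_{\tau_iT}^{-1}S^A_{\tau_i}$. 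Of the eight terms produced in the naive expansion of $H^A_T$, the four that correspond to opposite-direction round trips (buy-then-sell, sell-then-buy) cancel: in each such pair, the $S^A_{\tau_2}$ prefactor coming from the post-first-trade spread exponent (a net $\phi^{-Q_1+1}$ or $\phi^{-Q_1-1}$ with $Q_1 = \pm 1$) exactly matches the coefficient obtained by applying optional sampling to $X$, leaving zero net contribution. The four surviving terms are exactly those displayed.

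Once the formula is in hand, the conclusion is immediate. Since $\phi > 1$, the four prefactors $1 - \phi^{-1}$, $\phi - 1$, $1 - \phi^{-2}$, $\phi^2 - 1$ are all strictly positive, the discount factors are positive, and $S^A_t = P_{tT}\mbE[X \mid \mcF^A_t] \geq 0$ because $X \geq 0$. Hence each of the four expectations is non-negative and $H^A_0 \geq 0$. Non-triviality forces $\mbP(\tau_1 < T) > 0$, so at least one of $\mbP(\tau_1^+ < \tau_1^-)$ or $\mbP(\tau_1^- < \tau_1^+)$ is positive, and on that event $S^A_{\tau_1}$ is almost surely strictly positive (otherwise the spread-crossing inequality would degenerate to $0 \geq 0$ and would not correspond to a genuine trade in the model). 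The corresponding first-trade expectation is therefore strictly positive, and adding the three non-negative remaining terms yields $H^A_0 > 0$. The main obstacle I anticipate is the bookkeeping for the cancellation of the four cross-direction contributions: one must carefully track the $\phi^{\pm Q_1 \pm 1}$ exponents that link the two trades and be precise about which indicator lives in which stopping-time $\sigma$-algebra before applying the tower property.
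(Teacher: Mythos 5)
Your proposal is correct and follows essentially the same route as the paper: the paper likewise obtains the displayed four-term formula for $H^A_0$ by combining the optional sampling theorem for the deflated gain martingale with the measurability of the trade indicators (Lemma 1) and the tower property, notes that the opposite-direction pairs (buy-then-sell and sell-then-buy) contribute no value because the spread exponent $\phi^{-Q_1\mp 1}$ collapses to unity exactly when the trades have opposite signs, and concludes strict positivity of the surviving terms from non-triviality. Your parenthetical argument that $S^A_{\tau_1}$ is strictly positive on a genuine trade event is, if anything, slightly more careful than the paper, which simply asserts that the remaining terms are strictly positive when the corresponding events have nonzero probability.
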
 
%
\section{Inventory Aversion} 
\label{sec: inventory aversion}
\noindent {\bf Scenario 5}. We consider the situation where traders prefer to keep a  low profile in the market and thus wish to avoid taking excessively long or short positions. This is not unusual: it is often the case that traders are market-neutral and tend to keep their position flat on average. This situation can be modelled as follows. Suppose in the context of the general setup of the last two sections that a trade has just been undertaken, and the new mid-price now shared by both traders is $\bar S_{\tau_1}$ where $\tau_1$ is the time of the trade. Thus $\bar S_{\tau_1} = \phi^{-1} S^A_{\tau_1}$  if the initial trade was a purchase by Trader $A$, and $\bar S_{\tau_1} = \phi S^A_{\tau_1}$ if the initial trade was a sale by Trader $A$. Now let $Q^A_{\tau_1}$ be the inventory of Trader $A$ at time $\tau_1$. By inventory, we mean the number of contracts held by $A$. Then  the inventory of Trader $B$ is given by  $Q^B_{\tau_1} = - Q^A_{\tau_1}$ since the initial positions of each trader are assumed to have been flat. We shall assume now  that at the time the new mid-price is determined there is a further small adjustment of the mid-price set in such a way as to discourage the development of excessively large long or short positions -- that is to say, to keep the absolute inventory relatively small. The required adjustment, which holds with effect from time $\tau$, takes the form 
\begin{equation}
\bar S_{\tau_1} = \phi ^{-Q^A_{\tau_1}} S^A_{\tau_1} \to \phi ^{-Q^A_{\tau_1}} \psi_A^{-Q^A_{\tau_1}} S^A_{\tau_1} \, ,
\end{equation}
for Trader $A$, where the adjustment factor $\psi_A$ is taken to be strictly greater than unity. Thus, if the inventory is positive (long position), the mid-price is knocked down by a factor $\psi_A^{-1}$, hence lowering both the bid price and the offer price a bit. This will tend to suppress further purchases by the trader and will encourage sales. But if the inventory is negative (short position), the mid-price is bumped up by a factor of $\psi_A$, and this will raise the bid price and the offer price, encouraging purchases by the trader and discouraging sales.  

We have assumed so far that traders have the same initial beliefs and initial knowledge, and hence the same initial mid-price; at this stage we can allow for the possibility that they have different levels of inventory aversion. Then Trader $B$ adjusts his mid-price after the first trade by the prescription
\begin{equation}
\bar S_{\tau_1} = \phi ^{Q^B_{\tau_1}} S^B_{\tau_1} \to \phi ^{Q^B_{\tau_1}} \psi_B^{Q^B_{\tau_1}} S^B_{\tau_1} \, ,
\end{equation}
where $\psi_B > 1$ is the inventory aversion adjustment factor for Trader $B$. 

In fact, one can derive an upper bound for the inventory aversion factors. The argument is as follows. Let us assume that at some given time in the trading session Trader $A$ buys ({\it  resp.}, sells) the contract paying ({\it  resp.}, receiving) the amount $\phi^{-1}S^A_t$ ({\it  resp.}, $\phi S^A_t$). Then, the trader's new mid price becomes $\phi^{-1}\psi_A^{-1} S^A_t $ ({\it  resp.}, $\phi\psi_A S^A_t $). Now, this new mid-price has to satisfy the condition that the associated new offer price $\psi_A^{-1} S^A_t $ ({\it  resp.}, bid price $\psi_A S^A_t $) based on it, should be strictly greater than ({\it  resp.}, strictly less than) the price just paid ({\it resp.}, received). In other words, $\psi_A$ has to satisfy an inequality of the form $\psi_A^{-1} S^A_t >\phi^{-1} S^A_t $ ({\it  resp.}, $\psi_A S^A_t <\phi S^A_t $). For it would be irrational, or at least inefficient, in the absence of further information, for Trader $A$ to be willing to immediately sell the contract for less than or the same as what he has just bought it ({\it  resp.}, buy the contract for more than or the same as what he has just sold it). For such actions would at the very least lead to some loss of shoe leather and in general would lead to opportunists taking advantage of him by arbitrage. Hence we obtain: 
%
%
\begin{Lemma}
 The inventory aversion adjustment factors $\psi_A$ and  $\psi_B$ are strictly bounded from above by the spread factor $\phi$. 
 \end{Lemma}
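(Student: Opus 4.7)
The plan is to formalize the no-immediate-arbitrage (rationality) argument sketched in the paragraph immediately preceding the lemma, applied symmetrically to Trader $A$ and Trader $B$ and, for each, to both a buy and a sell. Since $\psi_A$ and $\psi_B$ enter the post-trade mid-price through the prescriptions in the previous displays, the whole proof should reduce to two cases per trader, each a one-line inequality on the strictly positive quantity $S^A_t$ or $S^B_t$.

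First I would treat Trader $A$ just after a purchase. The trade price paid is $\phi^{-1} S^A_t$; the inventory becomes $Q^A_t = 1$; the prescription $\bar S_{\tau_1} = \phi^{-Q^A_{\tau_1}} \psi_A^{-Q^A_{\tau_1}} S^A_{\tau_1}$ yields new mid-price $\phi^{-1} \psi_A^{-1} S^A_t$, and hence new offer price $\phi \cdot \phi^{-1} \psi_A^{-1} S^A_t = \psi_A^{-1} S^A_t$. The rationality requirement is that Trader $A$'s new offer strictly exceed the price he just paid: otherwise a counterparty could immediately re-sell to him at a price no greater than what they just bought from him, yielding a riskless gain in spread. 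This gives $\psi_A^{-1} S^A_t > \phi^{-1} S^A_t$, i.e.\ $\psi_A < \phi$ after dividing by the strictly positive $S^A_t$.

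Second, I would apply the symmetric argument after a sale by Trader $A$: the receipt is $\phi S^A_t$, the inventory is $Q^A_t = -1$, the new mid-price is $\phi \psi_A S^A_t$, and the new bid price is $\psi_A S^A_t$. Rationality demands $\psi_A S^A_t < \phi S^A_t$, i.e.\ $\psi_A < \phi$, for otherwise Trader $A$ would be willing to buy back at a price no less than what he has just received. Finally, the very same two-case argument for Trader $B$, using the prescription $\bar S_{\tau_1} = \phi^{Q^B_{\tau_1}} \psi_B^{Q^B_{\tau_1}} S^B_{\tau_1}$ with the sign of $Q^B$ opposite to $Q^A$, yields $\psi_B < \phi$.

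The only non-routine step is articulating precisely the notion of rationality used: namely that a trader, having just transacted at a given price and having acquired no new information in that instant, should not post a price on the opposite side of the market that is weakly more favourable to an arbitrageur than the price at which he has just transacted. Once this is stated as a postulate, each of the four sub-cases collapses to a single strict inequality between $\psi$-scaled and $\phi$-scaled multiples of a positive mid-price, and the lemma follows immediately.
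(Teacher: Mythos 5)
Your proof is correct and takes essentially the same approach as the paper: the paper's argument is exactly the no-immediate-arbitrage rationality requirement applied to Trader $A$'s post-trade offer after a buy and bid after a sell, yielding $\psi_A^{-1} S^A_t > \phi^{-1} S^A_t$ and $\psi_A S^A_t < \phi S^A_t$, hence $\psi_A < \phi$, with the symmetric argument for Trader $B$. Your only addition is to state the rationality postulate explicitly, which the paper leaves informal.
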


With this scheme in mind we return to the setup of Scenario 4 but we allow now for inventory aversion in our Scenario 5. The profit made by Trader $A$ on the first trade remains the same as in Scenario 4, whereas the inventory aversion adjustments begin to take effect in relation to the prices made in anticipation of a second trade. In particular we now set
\begin{equation}
\tau_2^+ = \inf_{t \in [\tau_1, T]} (\phi^{-Q^A_1-1}\psi_A^{-Q^A_1}S^{A}_t \geq  \phi ^{-Q^B_1+1}\psi_B^{-Q^B_1}S^{B}_t)\,,
\end{equation}
and
\begin{equation}
\tau_2^- = \inf_{t \in [\tau_1, T]} (\phi^{-Q^A_1 + 1} \psi_A^{-Q^A_1}S^{A}_t \leq  \phi^{-Q^B_1-1}\psi_B^{-Q^B_1}S^{B}_t), 
\end{equation}
where for convenience we write $Q^A_1$ for $Q^A_{\tau_1}$ and $Q^A_1$ for $Q^B_{\tau_1}$.
Assuming that trading stops after two trades, we see that the profit made by Trader $A$ when both traders have inventory aversion takes the form
\begin{align}
 H^A_T =&   \left [ X - P_{\tau_1T}^{-1} \,\phi^{-1}S^{A}_{\tau_1}\, \right]\mathds 1(\tau_1^+ < \tau_1^-)  + 
 \left [P_{\tau_1T}^{-1}\, \phi \,S^{A}_{\tau_1} - X \right] \mathds 1(\tau_1^- < \tau_1^+)  \nonumber \\
  &\, \, \,+\left [ X - P_{\tau_2T}^{-1}\, \phi^{-Q^A_1 -1}\, \psi_A^{-Q^A_1}\, S^{A}_{\tau_2} \, \right]  \mathds 1(\tau_2^+ < \tau_2^-) \nonumber \\ 
  & \quad \, \, \, + \left [P_{\tau_2T}^{-1}\, \phi ^{-Q^A_1 + 1}\, \psi_A^{-Q^A_1}\,S^{A}_{\tau_2} - X \right] \mathds 1(\tau_2^- < \tau_2^+) \, .
\end{align}
The value of Trader $A$'s position can be worked out by the methods already discussed, and the result is as follows: 
\begin{align} \label{eq: H^A_0 inventory 2 trades - 1st}
 H^A_0 = \, &P_{0T} (1 - \phi^{-1}) \mathbb{E} \left [P_{\tau_1T}^{-1} \,S^{A}_{\tau_1} \, \mathds 1(\tau_1^+ < \tau_1^-)\right]+ 
  P_{0T}  (\phi-1) \mathbb{E} \left [P_{\tau_1T}^{-1}\,S^{A}_{\tau_1}\, \mathds 1(\tau_1^- < \tau_1^+)\right] \nonumber \\
  & + P_{0T} (\phi^2 \psi_A - 1) \, \mathbb{E} \left [P_{\tau_2T}^{-1} \,S^{A}_{\tau_2} \, \mathds 1(\tau_1^- < \tau_1^+) \, \mathds 1 (\tau_2^- < \tau_2^+)\right] \nonumber \\ 
  & \quad + P_{0T} (\psi^{-1}_A - 1) \, \mathbb{E} \left [P_{\tau_2T}^{-1} \,S^{A}_{\tau_2} \, \mathds 1(\tau_1^+ < \tau_1^-) \, \mathds 1 (\tau_2^- < \tau_2^+)\right] \nonumber \\ 
   & \quad \quad + P_{0T} (1 - \psi_A) \, \mathbb{E} \left [P_{\tau_2T}^{-1} \,S^{A}_{\tau_2} \, \mathds 1(\tau_1^- < \tau_1^+) \, \mathds 1 (\tau_2^+ < \tau_2^-)\right] \nonumber \\ 
  &\quad \quad  \quad + P_{0T}  (1 - \phi^{-2} \psi^{-1}_A) \,  \mathbb{E} \left [P_{\tau_2T}^{-1}\,S^{A}_{\tau_2}\,  \mathds 1(\tau_1^+ < \tau_1^-) \, \mathds 1(\tau_2^+ < \tau_2^-)\right] .
\end{align}
Here the first two terms represent the profits from the first trade, whereas the remaining terms represent the profits from the second trade, allowing for the different possible ways in which the history of the second trade might evolve. 
\begin{Proposition}
The value of Trader A's position under Scenario 5 is strictly positive in any non-trivial trading model with inventory risk. 
\end{Proposition}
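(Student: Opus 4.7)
The plan is to apply optional sampling to the deflated-gain martingale $G_t = P_{0t}S^A_t + P_{0T}X\,\mathds{1}(t\geq T)$, exactly as in the proof of Proposition 6, in order to reduce each of the six expectations in \eqref{eq: H^A_0 inventory 2 trades - 1st} to an expectation of $X$ against a trade-sign indicator. Concretely, for any event $A \in \mathcal F^A_{\tau_i}$ contained in $\{\tau_i < T\}$ the optional-sampling relation gives $\mathbb E\bigl[P_{\tau_iT}^{-1}\,S^A_{\tau_i}\,\mathds{1}(A)\bigr] = \mathbb E\bigl[X\,\mathds{1}(A)\bigr]$, and all six events appearing on the right-hand side of \eqref{eq: H^A_0 inventory 2 trades - 1st} satisfy these hypotheses since the strict inequalities defining them force $\tau_i < T$.

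I would then refine by splitting the first-trade buy event $\{\tau_1^+ < \tau_1^-\}$ into the three disjoint second-trade sub-events buy-then-buy, buy-then-sell and no-second-trade, and symmetrically for the first-trade sell event. Collecting coefficients on each of the resulting six disjoint sub-events, the two coefficients inherited from the inventory-aversion cross-trade terms become $\psi_A^{-1} - \phi^{-1}$ (on buy-then-sell) and $\phi - \psi_A$ (on sell-then-buy); by Lemma 2 ($\psi_A < \phi$) both are strictly positive. The remaining four sub-event coefficients are manifestly strictly positive because $\phi > 1$ and $\psi_A > 1$.

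Since every coefficient in the resulting decomposition is strictly positive and $X \geq 0$, the combination is non-negative, and it is strictly positive whenever at least one sub-event carries a positive value of $\mathbb E[X\,\mathds{1}(A_\alpha)]$, which is exactly what the non-triviality hypothesis provides. The main obstacle is the bookkeeping of the optional-sampling reduction at the second stopping time $\tau_2$: one must verify that each of the four $\mathcal F^A_{\tau_2}$-measurable second-trade events lies inside $\{\tau_2 < T\}$, using the convention that $\tau_i = T$ encodes the absence of a trade, so that the identity above applies. Once this is in place, Lemma 2 supplies the decisive bound $\psi_A < \phi$ that converts what would otherwise be a potential net loss on the cross-trades into a net gain, and the conclusion $H^A_0 > 0$ follows exactly as in Propositions 6 and 7.
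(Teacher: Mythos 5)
Your proposal is correct and takes essentially the same route as the paper: the paper likewise splits each first-trade event over the three disjoint second-trade outcomes (its tower/optional-sampling identity \eqref{eq: tower prop 8} is equivalent to your reduction to expectations of $X$ against trade-sign indicators) and arrives at exactly your six coefficients $\phi^2\psi_A+\phi-2$, $\psi_A^{-1}-\phi^{-1}$, $\phi-\psi_A$, $2-\phi^{-1}-\phi^{-2}\psi_A^{-1}$, $1-\phi^{-1}$ and $\phi-1$, all strictly positive by Lemma 2. Your choice to write everything as $\mathbb{E}\left[X\,\mathds{1}(A)\right]$ rather than the paper's $\mathbb{E}\left[P_{\tau_2T}^{-1}\,S^{A}_{\tau_2}\,\mathds{1}(A)\right]$, together with your explicit check that each event is $\mathcal{F}^A_{\tau_i}$-measurable and contained in $\{\tau_i<T\}$, is only a cosmetic (indeed slightly tidier) variant of the same argument.
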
 
%
\begin{proof}
First we use the optional sampling theorem together with Lemma \ref{Lemma 1} to show that
\begin{align}\label{eq: tower prop 8}
\mathbb{E} \left [P_{\tau_1T}^{-1} \,S^{A}_{\tau_1}  \, \mathds 1(\tau_1^{\pm} < \tau_1^{\mp})\right] &= 
\mathbb{E} \left [\mathbb{E} \left [  X \given \mcF_{\tau_1} \right]  \mathds 1(\tau_1^{\pm} < \tau_1^{\mp})\right] \nonumber \\ &=
\mathbb{E} \left [ \mathbb{E} \left [  \mathbb{E} \left [  X \given \mcF_{\tau_1} \right]  \mathds 1(\tau_1^{\pm} < \tau_1^{\mp}) \given \mcF_{\tau_2}\right] \right] \nonumber \\ &=
\mathbb{E} \left [ \mathbb{E} \left [  \mathbb{E} \left [  X \given \mcF_{\tau_1} \right]  \given \mcF_{\tau_2} \right]  \mathds 1(\tau_1^{\pm} < \tau_1^{\mp}) \right] \nonumber \\ &=
\mathbb{E} \left [\mathbb{E} \left [  X \given \mcF_{\tau_2} \right]  \mathds 1(\tau_1^{\pm} < \tau_1^{\mp})\right] \nonumber \\ &=
\mathbb{E} \left [P_{\tau_2T}^{-1} \, S^{A}_{\tau_2} \, \mathds 1(\tau_1^{\pm} < \tau_1^{\mp})\right] .
\end{align}
Then if we substitute the relation
\begin{align}
\mathds 1(\tau_1^{\pm} < \tau_1^{\mp}) =&\, \mathds 1(\tau_1^{\pm} < \tau_1^{\mp}) \mathds 1(\tau_2=T)\nonumber\\
& \quad + \mathds 1(\tau_1^{\pm} < \tau_1^{\mp}) \mathds 1(\tau_2^+ < \tau_2^-) +
\mathds 1(\tau_1^{\pm} < \tau_1^{\mp}) \mathds 1(\tau_2^- < \tau_2^+)
\end{align}
into \eqref{eq: H^A_0 inventory 2 trades - 1st} and make use of \eqref{eq: tower prop 8}, we obtain
\begin{align} \label{eq: H^A_0 inventory 2 trades}
 H^A_0 = \,
  & P_{0T} (\phi^2 \psi_A + \phi - 2) \, \mathbb{E} \left [P_{\tau_2T}^{-1} \, S^{A}_{\tau_2} \, \mathds 1(\tau_1^- < \tau_1^+) \, \mathds 1 (\tau_2^- < \tau_2^+)\right] \nonumber \\ 
  & \quad + P_{0T} (\psi^{-1}_A - \phi^{-1}) \, \mathbb{E} \left [P_{\tau_2T}^{-1} \, S^{A}_{\tau_2} \, \mathds 1(\tau_1^+ < \tau_1^-) \, \mathds 1 (\tau_2^- < \tau_2^+)\right] \nonumber \\ 
   & \quad \quad + P_{0T} (\phi - \psi_A) \, \mathbb{E} \left [P_{\tau_2T}^{-1} \, S^{A}_{\tau_2} \, \mathds 1(\tau_1^- < \tau_1^+) \, \mathds 1 (\tau_2^+ < \tau_2^-)\right] \nonumber \\ 
  &\quad \quad  \quad + P_{0T}  (2 - \phi^{-1} -  \phi^{-2} \psi^{-1}_A) \,  \mathbb{E} \left [P_{\tau_2T}^{-1}\, S^{A}_{\tau_2}\,  \mathds 1(\tau_1^+ < \tau_1^-) \, \mathds 1(\tau_2^+ < \tau_2^-)\right] \nonumber \\
  &\quad\quad\quad\quad+ P_{0T} (1 - \phi^{-1}) \mathbb{E} \left [P_{\tau_1T}^{-1} \, S^{A}_{\tau_1} \, \mathds 1(\tau_1^+ < \tau_1^-)\, \mathds 1(\tau_2 = T)\right] \nonumber\\
  &\quad\quad\quad\quad\quad + 
  P_{0T}  (\phi-1) \mathbb{E} \left [P_{\tau_1T}^{-1}\, S^{A}_{\tau_1}\, \mathds 1(\tau_1^- < \tau_1^+)\, \mathds 1(\tau_2 = T)\right] ,
\end{align}
and by Lemma 2 it follows that the coefficients of all six terms are strictly positive. 
\end{proof}

\section{Multiple trades}
\label{sec: multiple trades}
\noindent {\bf Scenario 6}. When multiple successive trades take place in a given trading session one can adapt the notation of the previous sections by recursively defining the stopping times
\begin{equation}
\tau_k^+ = \inf_{t \in [\tau_{k-1}, T]} (\phi^{-Q^A_{k-1}-1}\psi_A^{-Q^A_{k-1}}S^{A}_t \geq  \phi ^{-Q^B_{k-1}+1}\psi_B^{-Q^B_{k-1}}S^{B}_t)\,,
\end{equation}
and
\begin{equation}
\tau_{k}^- = \inf_{t \in [\tau_{k-1}, T]} (\phi^{-Q^A_{k-1} + 1} \psi_A^{-Q^A_{k-1}}S^{A}_t \leq  \phi^{-Q^B_{k-1}-1}\psi_B^{-Q^B_{k-1}}S^{B}_t)\, , 
\end{equation}
where $k=1,\dots,n$, with the convention that $\tau_0=0$. Here, the inventory $Q^A_k$ of Trader $A$ after the first $k$ trades is given by 
\begin{equation}
    Q^A_k = \sum_{r=1}^k  \mathds 1(\tau_r^+ < \tau_r^-) - \mathds 1(\tau_r^- < \tau_r^+)\,,
\end{equation}
and it should be evident that $Q^B_k = - Q^A_k$. 
With this notation in mind, let us consider the situation where the game master permits up to $n$ trades in the trading session. We introduce a collection of indices $\epsilon_k$ for $k=1,\dots,n$ that take the values $\pm 1$. Thus, for  $k\in\{1,2,\dots,n\}$ the index $\epsilon_k$ ranges over the set 
$\{+1,-1\}$. 
To analyze the profitability of Trader $A$ over the given trading session it will be useful to have a compact expression for the trading profits. We note that in the case of a single trade, the profit can be written 
\begin{align} \label{single trade}
 H^A_T (1)=   \sum_{\epsilon_1} \epsilon_1 \left(X - \phi^{- \epsilon_1} S_{\tau_1}P^{-1}_{\tau_1T} \right)\, 
    \big[\, \half(1+\epsilon_1) \mathds 1(\tau_1^+ < \tau_1^-) + \half(1-\epsilon_1) \mathds 1(\tau_1^- < \tau_1^+) \, \big] ,
\end{align}
and it can be verified that this reduces to formula \eqref{eq: H^A_T scn3}, with which we are already familiar. A little less obviously, one can check that for 
$n = 2$ we have
\begin{align} \label{double trade}
& H^A_T (2)=   \sum_{\epsilon_1} \epsilon_1 \left(X - \phi^{- \epsilon_1} S_{\tau_1} P_{\tau_1T}^{-1}\right)\, 
    \big[\, \half(1+\epsilon_1) \mathds 1(\tau_1^+ < \tau_1^-) + \half(1-\epsilon_1) \mathds 1(\tau_1^- < \tau_1^+) \, \big] \nonumber \\
  &+    \sum_{\epsilon_1, \epsilon_2} \epsilon_2 \left(X - \phi^{- (\epsilon_1 + \epsilon_2)}  \psi^{-\epsilon_1} S_{\tau_2} P_{\tau_2T}^{-1} \right) 
 \prod_{k=1,2} \big[\, \half(1+\epsilon_k) \mathds 1(\tau_k^+ < \tau_k^-) + \half(1-\epsilon_k) \mathds 1(\tau_k^- < \tau_k^+) \, \big] .
\end{align}
Here we have separated the profits resulting from the first trade from the profits resulting from the second trade. Alternatively, we an write
\begin{align} \label{separated, compact}
& H^A_T (2)= 
H^A_T (1) \, \mathds 1(\tau_2 = T)    
   +  \sum_{\epsilon_1, \epsilon_2} \bigg[ \bigg( \epsilon_1 \left(X - \phi^{- \epsilon_1} S_{\tau_1} P_{\tau_1T}^{-1}\right)
+    \epsilon_2 \left(X - \phi^{- (\epsilon_1 + \epsilon_2)}  \psi^{-\epsilon_1} S_{\tau_2} P_{\tau_2T}^{-1} \right) \bigg) \nonumber \\
& \quad \quad \quad  \quad \times \prod_{k=1,2} \big[\, \half(1+\epsilon_k) \mathds 1(\tau_k^+ < \tau_k^-) + \half(1-\epsilon_k) \mathds 1(\tau_k^- < \tau_k^+) \, \big]
\bigg] ,
\end{align}
which splits the profits into those deriving from the situation where there is a single trade and those deriving from situations where there are two trades. 
The advantage of \eqref{separated, compact} is that this expression, when taken with \eqref{single trade},  readily generalizes to the $n$-trade case. To this end, it will be useful to define the random variables
\begin{align}
\mathscr{I}_k (\epsilon_k) =   \half(1+\epsilon_k) \mathds 1(\tau_k^+ < \tau_k^-) + \half(1-\epsilon_k) \mathds 1(\tau_k^- < \tau_k^+) \, ,
\end{align}
for $k=1,\dots,n$. Thus, $\mathscr{I}_k (1) = \mathds 1(\tau_k^+ < \tau_k^-)$ and $\mathscr{I}_k (-1) = \mathds 1(\tau_k^- < \tau_k^+)$. Let us write
\begin{equation}
    \sum_{\epsilon_1,\dots,\epsilon_n} = \sum_{\epsilon_1}\,\sum_{\epsilon_2}\,\dots \,\sum_{\epsilon_n} .
\end{equation}
One can then check that the total profitability of Trader $A$ over the given time frame in a multiple trade situation is given by
\begin{align}\label{eq: n trades H^A_T}
H^A_T(n) & = H^A_T(n-1)  \, \mathds 1(\tau_n = T) \nonumber \\ & + \sum_{\epsilon_1,\dots,\epsilon_n} \left[ \sum_{k=1}^n \epsilon_k\left(X - \phi^{-\sum_{r=1}^k \epsilon_r}\psi^{-\sum_{r=1}^{k-1} 
\epsilon_r} S_{\tau_k} P_{\tau_k T}^{-1} \right)\,\prod_{k=1}^{n} \mathscr{I}_k (\epsilon_k) \right].
\end{align}
\noindent The profits registered in \eqref{eq: n trades H^A_T} are those for the first $n$ trades, allowing for the possibility that there may be fewer than $n$ trades. The first term on the right takes into account the situation where there are $n-1$ (or fewer) trades, and the second term on the right gives the profitability when there are exactly $n$ trades. 
Here, to ease the notation we have written $\psi$ in place of $\psi_A$, since $\psi_B$ enters only indirectly, via the stopping times. As a step towards establishing Proposition \ref{prop: multiple trades} below, we introduce the following:
%
\begin{Lemma}\label{Lemma 3}
Let $\phi > \psi \geq 1$, let $n\in \mathbb{N}$, and let the series $(\epsilon_k)_{k = 1,2,\dots,n}$ be chosen such that $\epsilon_k\in\{1,-1\}$ for each $k\in\{1,\dots, n\}$. Then 
\begin{equation}\label{eq: lemma ineq}
    \sum_{k=1}^{n} \epsilon_k\left(1 - \phi^{-\sum_{r=1}^k \epsilon_r}\psi^{-\sum_{r=1}^{k-1} \epsilon_r} \right) > 0\,.
\end{equation}
\end{Lemma}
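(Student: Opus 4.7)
The plan is to view the partial sums $S_k := \sum_{r=1}^k \epsilon_r$ (with $S_0=0$) as a nearest-neighbour walk on $\mathds{Z}$ starting at the origin, and to decompose the left-hand side of \eqref{eq: lemma ineq} into a sum of contributions that are individually positive once the terms are regrouped by the level from which each step departs. With $\rho := \phi\psi$ one has $\phi^{-S_k}\psi^{-S_{k-1}} = \phi^{-\epsilon_k}\rho^{-S_{k-1}}$, so the $k$-th summand becomes $1 - \phi^{-1}\rho^{-S_{k-1}}$ when $\epsilon_k = +1$ and $\phi\rho^{-S_{k-1}} - 1$ when $\epsilon_k = -1$. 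Since this depends only on $\epsilon_k$ and on the level $S_{k-1}$, if I set $U_j := \#\{k : \epsilon_k = +1,\, S_{k-1} = j\}$ and $D_j := \#\{k : \epsilon_k = -1,\, S_{k-1} = j\}$, the full sum becomes
\begin{equation}
\sum_{j \in \mathds{Z}} U_j\bigl(1 - \phi^{-1}\rho^{-j}\bigr) + \sum_{j \in \mathds{Z}} D_j\bigl(\phi\rho^{-j} - 1\bigr).
\end{equation}

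The next step is the standard level-crossing identity: for a nearest-neighbour walk from $0$ to $S_n$ and each integer $j$, the net upward crossings of height $j + \tfrac12$ satisfy $U_j - D_{j+1} = \mathds{1}[0 \leq j \leq S_n - 1] - \mathds{1}[S_n \leq j \leq -1]$, the two indicators being mutually exclusive. I would then pair, at each level $j + \tfrac12$, the $M_j := \min(U_j, D_{j+1})$ up-steps departing from level $j$ with the same number of down-steps departing from level $j+1$, the key observation being that each such matched pair contributes
\begin{equation}
\bigl(1 - \phi^{-1}\rho^{-j}\bigr) + \bigl(\phi\rho^{-(j+1)} - 1\bigr) = \rho^{-j}\bigl(\tfrac{1}{\psi} - \tfrac{1}{\phi}\bigr) = \rho^{-j-1}(\phi - \psi) > 0,
\end{equation}
which is strictly positive precisely because $\phi > \psi$.

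What remains after this pairing are $S_n$ unmatched up-steps at levels $0, 1, \ldots, S_n - 1$ (if $S_n > 0$), or $|S_n|$ unmatched down-steps at levels $S_n + 1, \ldots, 0$ (if $S_n < 0$). Each unmatched up-step at a level $j \geq 0$ contributes at least $1 - \phi^{-1} > 0$, and each unmatched down-step at a level $l \leq 0$ contributes at least $\phi - 1 > 0$; here the hypothesis $\psi \geq 1$ is used to ensure $\rho \geq \phi > 1$. The sum is therefore a combination of strictly positive matched and unmatched contributions, and for $n \geq 1$ at least one such contribution is present: either $S_n \neq 0$, which forces at least one unmatched step, or $S_n = 0$, in which case $n \geq 2$ and $\sum_j M_j = n/2 \geq 1$ guarantees at least one matched pair. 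This yields \eqref{eq: lemma ineq}.

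The main obstacle I anticipate is stating the level-crossing identity and organizing the pairing cleanly without sinking into case-by-case bookkeeping; the decisive algebraic input, the matched-pair identity above, is a one-line computation, but it is precisely where the hypothesis $\phi > \psi$ enters and is what makes the re-organization of the sum into manifestly positive pieces go through.
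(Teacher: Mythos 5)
Your proof is correct, and it takes a genuinely different route from the paper's. The paper proceeds by induction on $n$: excursions of the inventory $(Q_k)$ below zero are eliminated by splitting the sum at a return to the origin and applying the inductive hypothesis to the two halves, and in the remaining all-nonnegative case each sell index $\alpha_i$ is greedily matched to an earlier buy index $\beta_i$ satisfying $Q_{\beta_i} > Q_{\alpha_i}$, with the paired contribution bounded below using $\phi^{Q_{\beta_i}-Q_{\alpha_i}}\,\psi^{-Q_{\beta_i -1}-Q_{\alpha_i-1}} \geq \phi\,\psi^{-1} > 1$. You instead exploit the substitution $\phi^{-S_k}\psi^{-S_{k-1}} = \phi^{-\epsilon_k}\rho^{-S_{k-1}}$ with $\rho=\phi\psi$, so that the $k$-th term depends only on the pair $(\epsilon_k, S_{k-1})$, and you regroup globally by departure level; the level-crossing identity $U_j - D_{j+1} = \mathds{1}[\,0\leq j\leq S_n-1\,]-\mathds{1}[\,S_n\leq j\leq -1\,]$ then lets you pair up-steps and down-steps across each edge \emph{exactly}, and your algebra checks out: each matched pair contributes precisely $\rho^{-j-1}(\phi-\psi)>0$, each unmatched up-step (necessarily at a level $j\geq 0$) at least $1-\phi^{-1}>0$, and each unmatched down-step (at a level $l \leq 0$) at least $\phi-1>0$, with at least one contribution present since either $S_n\neq 0$ or all $n\geq 2$ steps are matched. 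What your approach buys: no induction, a uniform treatment of negative excursions (the paper's splitting step exists precisely because individual terms at negative levels can be negative, e.g.\ $\epsilon=(-1,+1)$ gives $(\phi-1)+(1-\psi)$, which your matched pair at the edge $(-1,0)$ resolves in one stroke as $\rho^{0}(\phi-\psi)$), and an exact decomposition of the sum into manifestly positive pieces, hence a quantitative lower bound rather than bare positivity; restricting the pairing to adjacent levels turns the paper's inequality into an identity. The paper's version buys self-containedness, since it invokes no external combinatorial fact; your one reliance, the crossing identity, is standard and admits a one-line proof (crossings of a fixed edge alternate in direction, so their signed count is determined by the endpoints), and including that line is the only polish your write-up needs.
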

\begin{proof}
We proceed by induction. It is straightforward to check that \eqref{eq: lemma ineq} holds for $n=1$ and for $n=2$. Our goal is to show that if \eqref{eq: lemma ineq} holds for $n-1$ then it holds for $n$. Let the index series $(\epsilon_k)_{k\in\{1,\dots,n\}}$ be given. Define the series $(Q_k)_{k\in\{0,1,\dots,n\}}$ by setting $Q_0=0$ and $Q_k=Q_{k-1}+\epsilon_k$ for $k = 1,\dots,n $. We observe that \eqref{eq: lemma ineq} can be written
\begin{equation}\label{eq: lemma ineq 2}
    \sum_{k=1}^{n} \epsilon_k\left(1 - \phi^{-Q_k}\psi^{-Q_{k-1}} \right)>0\,.
\end{equation}
We consider the case  $\epsilon_1 = 1$, the case $\epsilon_1=-1$ being  analogous. Within the chosen case, we observe that  if a negative value is  taken by an element of the series $(Q_k)_{k\in\{1,\dots,n\}}$, then $\exists\, m \in\{2,\dots,n-1\}$ such that $Q_m=0$ and $Q_{m+1}=-1$. 
Define the series $(R_k)_{k\in\{0,\dots,n-m\}}$ by $R_0 = 0$ and $R_k = R_{k-1}+\epsilon_{k+m}$ for $k\in\{1,\dots,n-m\}$. Then we have
\begin{align}
    &\sum_{k=1}^{n} \epsilon_k\left(1 - \phi^{-Q_k}\psi^{-Q_{k-1}} \right) \nonumber\\
    &\quad = \sum_{k=1}^{m} \epsilon_k\left(1 - \phi^{-Q_k}\psi^{-Q_{k-1}} \right) + \sum_{k=1}^{n-m} \epsilon_{m+k}\left(1 - \phi^{-R_k}\psi^{-R_{k-1}} \right)\,.\label{eq: ineq induction}
\end{align}
Since $m<n$ and $n-m<n$, it holds by the inductive hypothesis that both terms on the right hand side of  \eqref{eq: ineq induction} are positive. Thus, it suffices to restrict our attention to the case where all elements of the series $(Q_k)_{k\in\{1,\dots,n\}}$ are non-negative. Now let $N<n$ be the number of times $\epsilon_k$ is negative for $k = 1, 2, \dots , n$. That is to say,
\begin{equation}
    N = |\{k\in\{1,\dots,n\} : \epsilon_k = -1\}|\,,
\end{equation}
where $|\{\cdot\}|$ denotes the cardinality of the set $\{\cdot\}$. If $N=0$, then $\epsilon_k = 1$ for all $k\in\{1,\dots,n\}$ and \eqref{eq: lemma ineq} holds. Thus, we turn to the case  $N\geq 1$. Let $\alpha_1$ denote the first $k$ for which $\epsilon_k$ is negative, let $\alpha_2$ denote the second $k$ for which $\epsilon_k$ is negative, and so on up to $\alpha_N$, which denotes the $N$-th $k$ for which $\epsilon_k$ is negative. Let $\beta_1$  be $\alpha_1 -1$, and for $k\in\{2,\dots,N\}$ set
\begin{equation}\label{eq: zeta_k}
    \beta_k = \sup \big\{ j \in \{1,\dots,\alpha_k\} \setminus \{\beta_1,\dots,\beta_{k-1} \}  : Q_j>Q_{\alpha_k} \big\} .
\end{equation}
It should be evident that for each $\alpha_k$ such that $k\in\{1,\dots,N\}$ there exists a number $\beta_k$ satisfying \eqref{eq: zeta_k}. To fix ideas, in Figure \ref{fig: lemma 3} we present an example of the trajectory of Trader $A$'s inventory  $(Q_k)_{k\in\{0,\dots,6\}}$ over six trades, where we work out $\alpha_k$ and $\beta_k$ for $k\in\{1,2\}$.
\vspace{.25cm} 
\begin{figure}[H]
    \centering
    \includegraphics[height=2.2in]{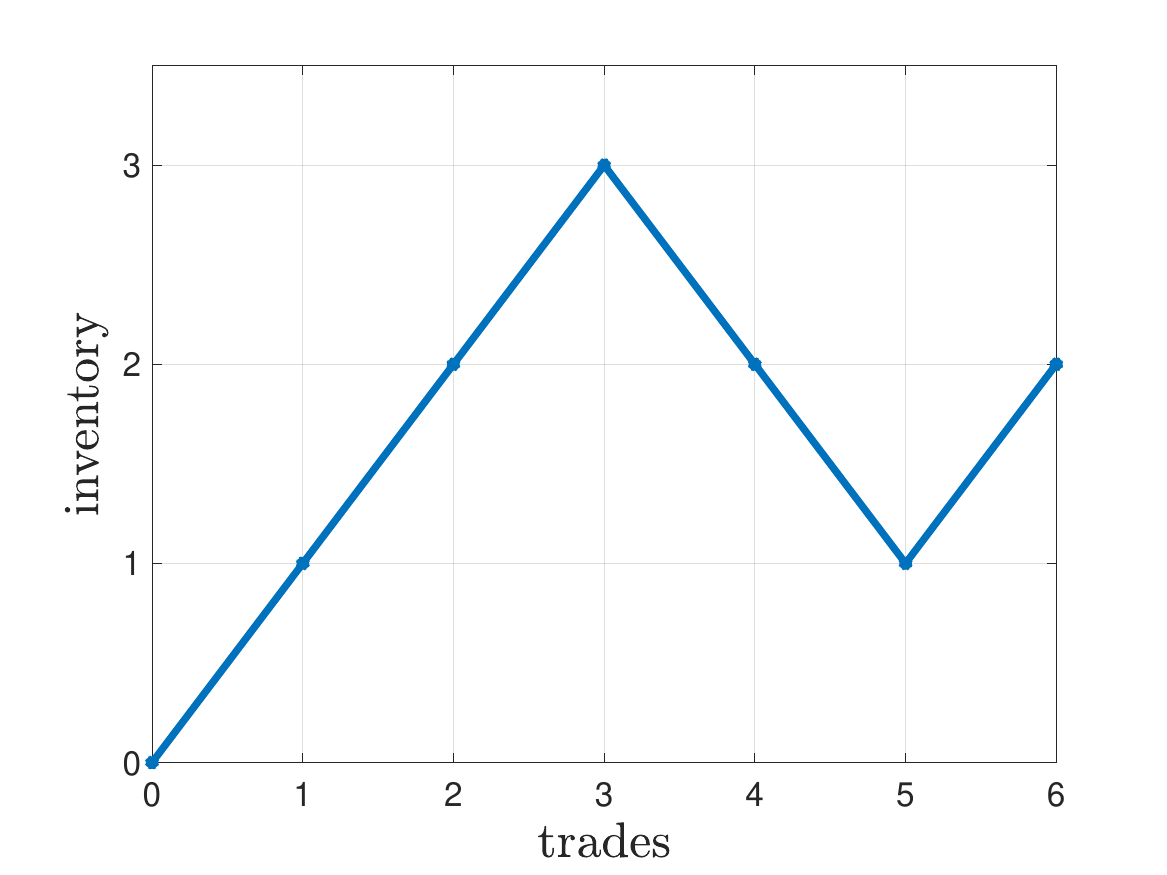}
        \caption{The total profitability of Trader $A$ is illustrated in this example involving six trades, where
$\epsilon_1=1$, $\epsilon_2=1$, $\epsilon_3=1$, $\epsilon_4=-1$, $\epsilon_5=-1$, $\epsilon_6=1$. A line leading up to a trading point indicates a buy at that point and a line leading down indicates a sell. The inventory rises to 3 at trade 3, then drops to 1 at trade 5, then rises to 2 at trade 6. Thus $\alpha_1=4$, $\alpha_2=5$, $\beta_1=3$, $\beta_2=2$. The value of Trader $A$'s position is given by the risk-neutral probability of the trade sequence multiplied by $(1-\phi^{-1}) + (1-\phi^{-2}\psi^{-1}) + (1-\phi^{-3}\psi^{-2}) - (1-\phi^{-2}\psi^{-3})   - (1-\phi^{-1}\psi^{-2}) + (1-\phi^{-2}\psi^{-1})$. We observe that since $\phi > \psi \geq 1$ the difference between the third trade and the fourth trade is positive, and likewise the difference between the second trade and the fifth trade is positive. } 
    \label{fig: lemma 3}
\end{figure} 
\noindent Define $\Lambda = \{1,\dots,n\} \setminus \left(\{\gamma_1,\dots,\alpha_{N}\}\cup \{\beta_1,\dots,\beta_{N}\}\right)$. Then
\begin{align} \label{trading terms}
    \sum_{k=1}^{n} \epsilon_k\left(1 - \phi^{-Q_k}\psi^{-Q_{k-1}} \right)
     = \sum_{k\in\Lambda} \left(1 - \phi^{-Q_k}\psi^{-Q_{k-1}} \right) + \sum_{i=1}^{N} \left(\phi^{-Q_{\alpha_i}} \psi^{-Q_{\alpha_{i}-1}} - \phi^{-Q_{\beta_i}} \psi^{-Q_{\beta_{i}-1}}  \right) ,
\end{align}
which is strictly greater than zero. In particular, we note that the first term of the right hand side of \eqref{trading terms} is strictly positive, since $Q_1 = 1$ and $\phi > \psi \geq 1$. Furthermore, since by construction $Q_{\beta_i} - Q_{\alpha_i} \geq 1$ for $i = 1, \dots, N$, we have $Q_{\beta_i-1} - Q_{\alpha_i-1} \geq -1$, and hence
\begin{align}
\phi^{-Q_{\alpha_i}}& \psi^{-Q_{\alpha_{i}-1}} - \phi^{-Q_{\beta_i}} \psi^{-Q_{\beta_{i}-1}} \nonumber \\
&= \phi^{-Q_{\beta_i}} \psi^{-Q_{\beta_{i}-1}}(  \phi^{Q_{\beta_i} - Q_{\alpha_i} } \, \psi^{-Q_{\beta_{i}-1} -Q_{\alpha_{i}-1}} - 1) 
\geq \phi^{-Q_{\beta_i}} \psi^{-Q_{\beta_{i}-1}} (  \phi \,\psi^{-1} - 1)\, ,
\end{align}
from which we deduce that the second term on the right side of \eqref{trading terms} is strictly positive. %
\end{proof}

\noindent Figure 4 provides a visual representation of the inequality \eqref{eq: lemma ineq} for sequences $(\epsilon_k)_{1\leq k\leq m}$ up to $m=10$.

\vspace{.2cm} 
\begin{figure}[H]
    \centering
    \includegraphics[height=2.2in]{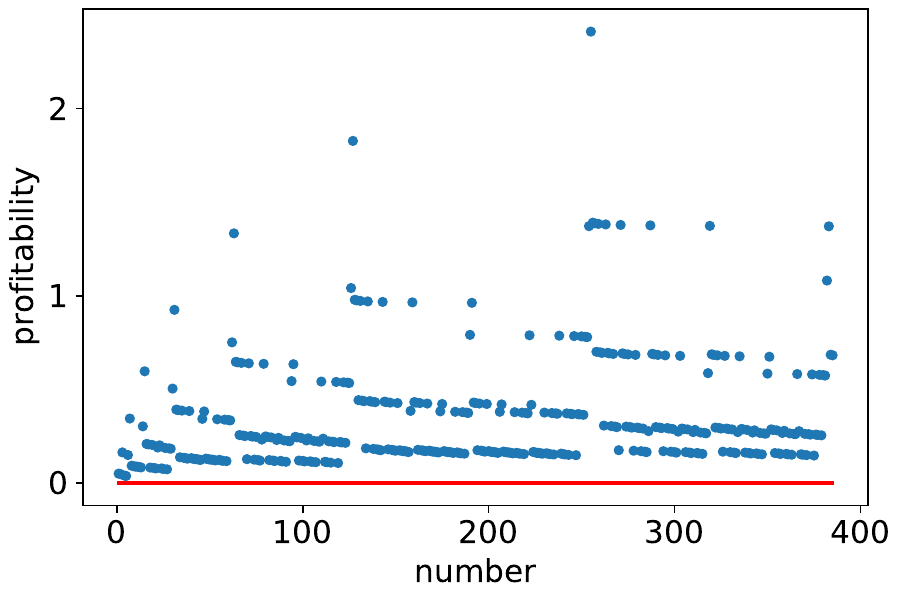}
      \caption{Positivity of profitability for $\phi=1.02$ and $\psi_A=1.01$. The positivity of  \eqref{eq: lemma ineq} is shown for these parameters and for all combinations of buys and sells up to a total of ten trades ($m\leq10$).  For $n\geq2$ we consider the binary representation of $n$ given by the binary number $1\upsilon_1\,\upsilon_2 \cdots \upsilon_m$ and we plot the value of \eqref{eq: lemma ineq} for the sequence $(\epsilon_k)_{1\leq k\leq m}$ where $\epsilon_k=2\,\upsilon_k-1$ for $k\in\{1,\dots,\,m\}$. }
    \label{fig: lemma three sum plot}
\end{figure} 
\noindent Armed with Lemma \ref{Lemma 3} we are now in a position to assert the following: 
\begin{Proposition}\label{prop: multiple trades}
The value of Trader A's position under Scenario 6 is strictly positive in any non-trivial trading model with inventory risk. 
\end{Proposition}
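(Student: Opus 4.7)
The plan is to collapse the expected profit into the single compact expression
\begin{equation*}
H^A_0 \;=\; P_{0T}\,\mathbb{E}\!\left[X\,\sum_{k=1}^n A_k\right],
\end{equation*}
where each $A_k$ is an $\mathcal{F}^A_{\tau_k}$-measurable random variable and the pathwise sum $\sum_{k=1}^n A_k$ coincides exactly with the expression appearing in Lemma \ref{Lemma 3} for the realised trade sequence; the desired positivity then follows immediately.

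First I would rewrite the total future-valued profit in compact form. Setting $\tilde\epsilon_k := \mathds{1}(\tau_k^+<\tau_k^-)-\mathds{1}(\tau_k^-<\tau_k^+)$ (so $\tilde\epsilon_k\in\{-1,0,+1\}$, vanishing precisely on $\{\tau_k=T\}$) and $Q^A_k := \sum_{r=1}^k \tilde\epsilon_r$, one observes that, irrespective of direction, the price at which $A$ transacts at the $k$-th trade is $\phi^{-Q^A_k}\psi^{-Q^A_{k-1}}S^A_{\tau_k}$ (the buy case $\tilde\epsilon_k=+1$ gives $Q^A_k=Q^A_{k-1}+1$, and the sell case $\tilde\epsilon_k=-1$ gives $Q^A_k=Q^A_{k-1}-1$). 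Combining the trade cash flows with the terminal dividend $Q^A_n X$ then yields
\begin{equation*}
H^A_T \;=\; Q^A_n\,X \;-\; \sum_{k=1}^n \tilde\epsilon_k\, \phi^{-Q^A_k}\psi^{-Q^A_{k-1}}\, P_{\tau_k T}^{-1}\, S^A_{\tau_k},
\end{equation*}
which is equivalent to \eqref{eq: n trades H^A_T}. By Lemma \ref{Lemma 1} both $\tilde\epsilon_k$ and $Q^A_k$ (and hence all the powers appearing above) are $\mathcal{F}^A_{\tau_k}$-measurable.

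Next I would set $A_k := \tilde\epsilon_k\bigl(1-\phi^{-Q^A_k}\psi^{-Q^A_{k-1}}\bigr)$ and compute $H^A_0=P_{0T}\,\mathbb{E}[H^A_T]$ term by term. Optional sampling applied to the closed UI martingale $G_t := P_{0t}S^A_t + P_{0T}\mathds{1}(t\geq T)X = P_{0T}\,\mathbb{E}[X\mid\mathcal{F}^A_t]$ gives $\mathbb{E}[X\mid\mathcal{F}^A_{\tau_k}] = P_{\tau_k T}^{-1}S^A_{\tau_k}$ on $\{\tau_k<T\}$. Since $A_k$ is $\mathcal{F}^A_{\tau_k}$-measurable and vanishes on $\{\tau_k=T\}$, the tower property yields the key identity
\begin{equation*}
\mathbb{E}\bigl[A_k\,P_{\tau_k T}^{-1}S^A_{\tau_k}\bigr] \;=\; \mathbb{E}\bigl[A_k\,X\bigr],
\end{equation*}
and the analogous identity with $A_k$ replaced by $\tilde\epsilon_k$ converts the dividend contribution $\mathbb{E}[Q^A_n X]=\sum_k\mathbb{E}[\tilde\epsilon_k X]$ summand by summand. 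Subtracting produces the claimed compact formula $H^A_0 = P_{0T}\,\mathbb{E}[X\sum_{k=1}^n A_k]$.

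Finally, on the event $\{N=m\}$ that exactly $m\geq 1$ trades occur in the sequence $(\epsilon_1,\ldots,\epsilon_m)$, the pathwise sum $\sum_{k=1}^n A_k$ reduces to $\sum_{k=1}^m \epsilon_k(1-\phi^{-Q_k}\psi^{-Q_{k-1}})$, which is strictly positive by Lemma \ref{Lemma 3} (applicable since $\phi>\psi$ by Lemma 2); on $\{N=0\}$ the sum is zero. Hence $X\sum_k A_k\geq 0$ almost surely, with strict positivity on $\{X>0,\,N\geq 1\}$. Non-triviality forces $\mathbb{P}(X>0)>0$ (otherwise all mid-prices vanish identically and no trade could be triggered), and in the information-based framework of Section \ref{sec: the model} the non-degeneracy of the Brownian-bridge noise ensures $\mathbb{P}(N\geq 1\mid X=x)>0$ for $x>0$ in a set of positive $\mu_X$-measure, so that $\mathbb{E}[X\sum_k A_k]>0$ and hence $H^A_0>0$. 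The main technical obstacle is the book-keeping at the boundary $\{\tau_k=T\}$, but this trivialises because either $\tilde\epsilon_k=0$ or $S^A_{\tau_k}=0$ there, so the no-trade contributions drop out automatically and never clutter the identity.
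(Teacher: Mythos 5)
Your proposal is correct and takes essentially the same route as the paper: an optional-sampling/tower-property reduction of $H^A_0$ to the pathwise combinatorial inequality of Lemma \ref{Lemma 3} (applicable since $\phi>\psi\geq 1$ by Lemma 2), which then gives strict positivity on the trade event. You merely spell out steps the paper's two-line proof leaves implicit --- the compact rewriting of \eqref{eq: n trades H^A_T}, the $\mcF^A_{\tau_k}$-measurability via Lemma \ref{Lemma 1}, the boundary bookkeeping on $\{\tau_k = T\}$, and the non-triviality argument ensuring $\mathbb{P}(X>0,\,N\geq 1)>0$ --- so there is no substantive difference in approach.
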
 
\begin{proof}
We wish to show that the expectation of the total profitability \eqref{eq: n trades H^A_T} is positive for any $n\in\mathbb{N}$.  By use of the optional stopping theorem we observe that the value of Trader $A$'s position under Scenario 6 is strictly positive if 
\begin{equation}\label{eq: condition prop 9 lemma 3}
    \sum_{k=1}^{n} \epsilon_k\left(1 - \phi^{-\sum_{r=1}^k \epsilon_r}\psi^{-\sum_{r=1}^{k-1} \epsilon_r} \right) > 0
\end{equation}
for any choice of the parameters $\phi> \psi \geq 1$ and for any sequence $(\epsilon_k)_{k = 1,2,\dots,n}$ such that $\epsilon_k\in\{1,-1\}$ for all $k\in\{1,\dots, n\}$.  We conclude the proof by use of Lemma \ref{Lemma 3}.
\end{proof}
With the framework of Scenario 4 in mind, we conduct simulation studies for the case when the game master allows up to ten trades. Figure \ref{fig: trading four trades} illustrates the trading mechanism. To make the various features of the model apparent to the naked eye, we have shown the first one-fifth of the time frame of the trading session. The lower right-hand  panel shows the trajectory of  the ratio of the quoted mid-prices, given by
\begin{equation}
S_t^A\,\phi^{-Q^A_t}\,\psi_A^{-Q^A_t} / S_t^B\,\phi^{-Q^B_t}\,\psi_A^{-Q^B_t} \, .
\end{equation}
%
\begin{figure}[H]
    \centering
    \includegraphics[height=3.4in]{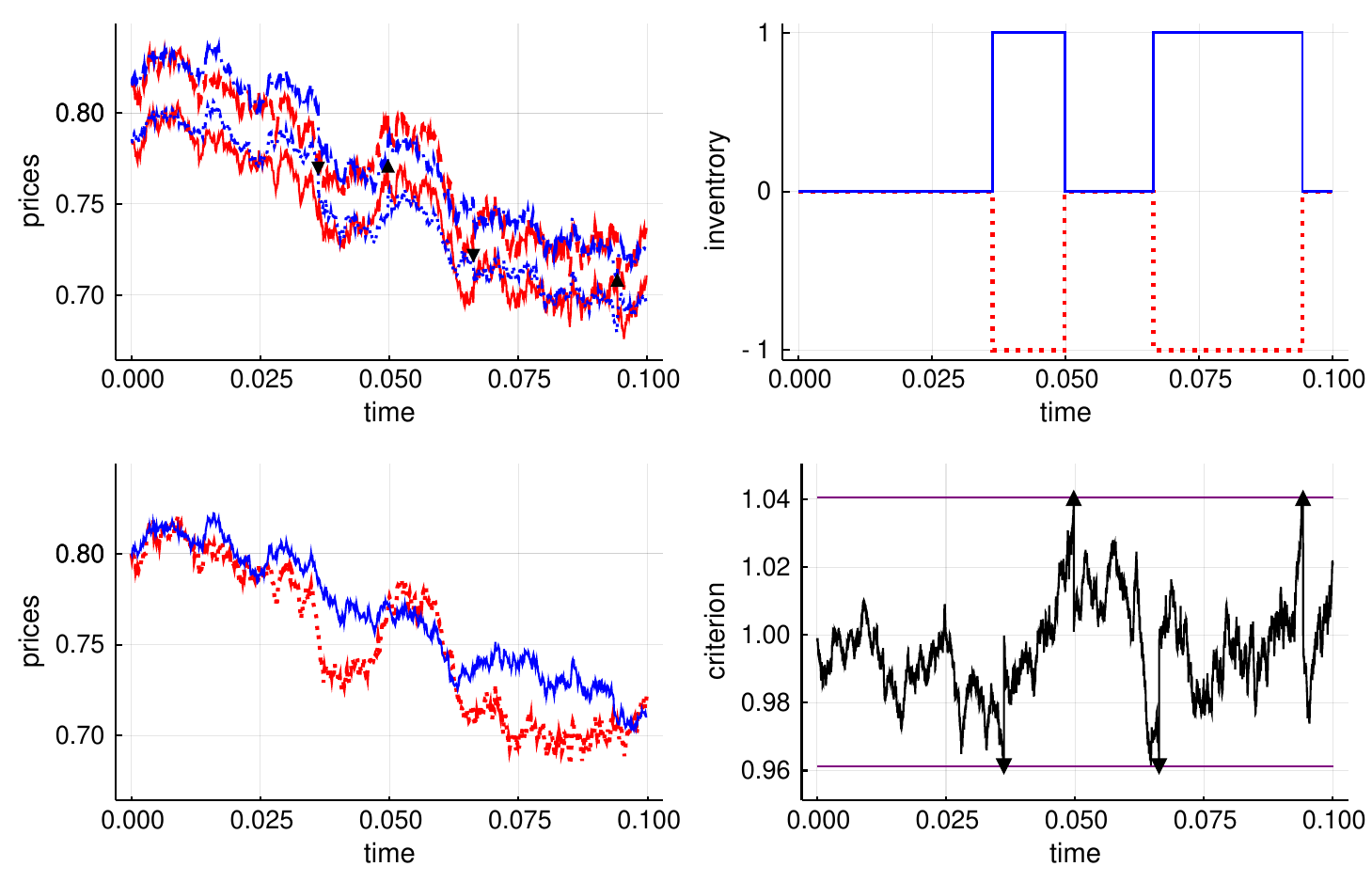}
      \caption{Trading dynamics when game master allows up to ten trades. The model parameters are $T=1$, $\sigma_B=1$, $\sigma_A=\sqrt{2}$, $\phi=1.02$, $\psi_A=1$, $\psi_B=1$ and $r=0$. A single outcome of chance is shown spanning the first one-tenth of the trading session. The top left panel shows the bid and offer quotes of Trader $A$ (red dashed line for offer, red solid line for bid) and for Trader $B$ (blue dash-dot line for offer, blue dotted line for bid). We indicate that a trade has taken place with an upward pointing arrow when Trader $A$ buys and a downward pointing arrow when Trader $A$ sells. The top right panel shows the inventories of Trader $A$ (dotted red line) and Trader $B$ (solid blue line). The bottom left panel shows the trajectories of $S^A_t$ (dotted red line) and $S_t^B$ (solid blue line). The bottom right panel shows the trajectory of  the quotient of the quoted mid-prices and the boundaries $\phi^2$ and $\phi^{-2}$. A trade occurs whenever the quotient process hits a boundary. }
    \label{fig: trading four trades}
\end{figure}

\noindent  In the figure below we plot the profitability of Trader $A$ as a function of the spread factor $\phi$, based on 100,000 simulations. The concavity of the profitability as a function of the spread is clear.
\begin{figure}[H]
    \centering
    \includegraphics[height=2.1in]{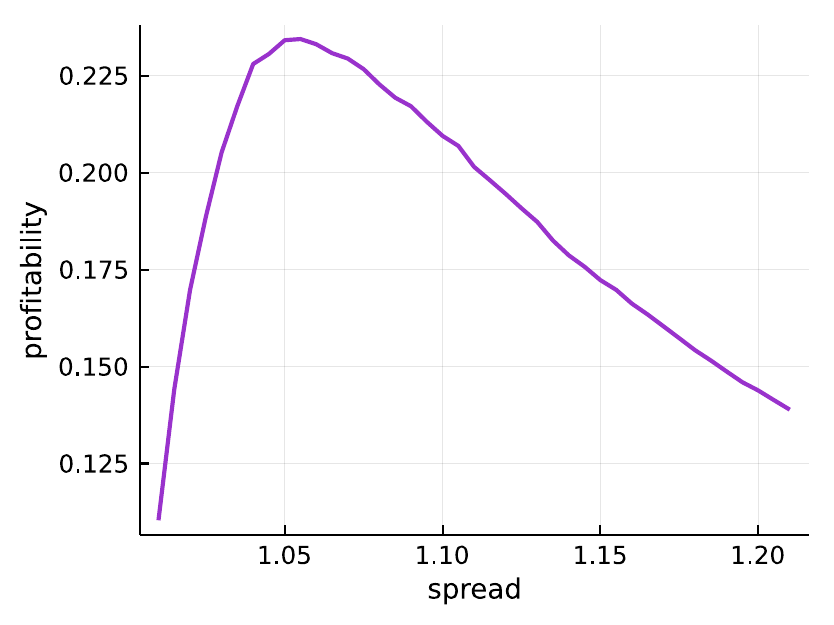}
        \caption{This plot shows the  profitability of Trader $A$ as a function of the spread. The model parameters are $r=0$, $\psi_A=1, \psi_B=1$, $\sigma_A=\sqrt{2}$, $\sigma_B=1$, and $p=0.8$. }
    \label{fig: average profitability}
\end{figure} 
\vspace{.25cm}
\noindent Figure \ref{fig: average profitability}  provides the insight that as the spread is decreased  information is transferred to the less informed trader at a lower cost. This means that there is a duality for Trader $B$, at least regarding the average profitability, between (i) acquiring more information while remaining less-informed than Trader $A$, and (ii) trading in a market where the game master declares a smaller spread. 

In Figure \ref{fig: surface profitability} we show the profitability surface as the spread factor and the information flow rate change. Figure \ref{fig: maximum inventory} demonstrates how the average inventory 
decreases as the inventory aversion parameter is rachetted up. It is interesting to observe that the dependence is essentially linear. 
\begin{figure}[H]
    \centering
    \includegraphics[height=2.0in]{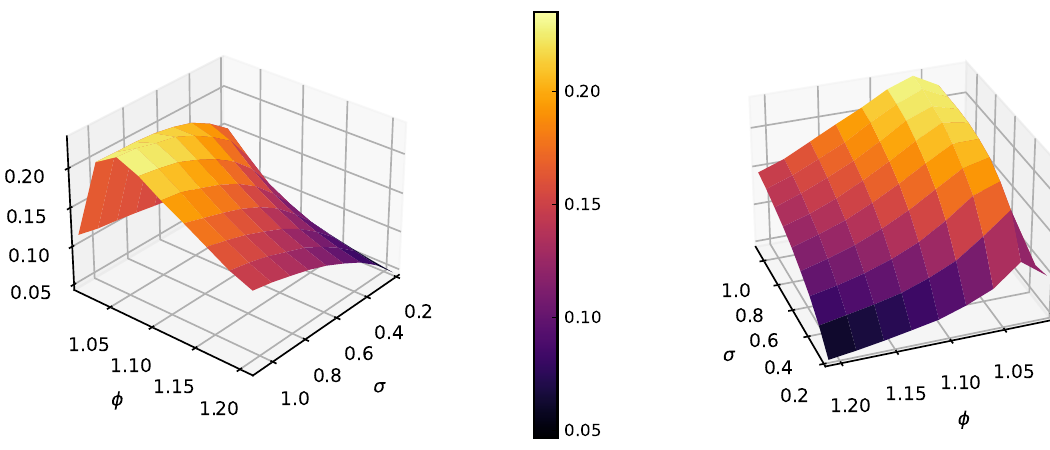}
      \caption{Trading profitability surface plotted as a function of the information flow rate $\sigma_B$ and the spread factor $\phi$. For each point on the surface we work out Trader $A$'s average profitability over the course of 100,000 simulations, with $T=1$, $\psi_A=1, \psi_B=1$, $r=0$, and $\sigma_A=\sqrt{2}\,\sigma_B$.}
    \label{fig: surface profitability}
\end{figure} 
%
%
\begin{figure}[H]
    \centering
    \includegraphics[height=2.5in]{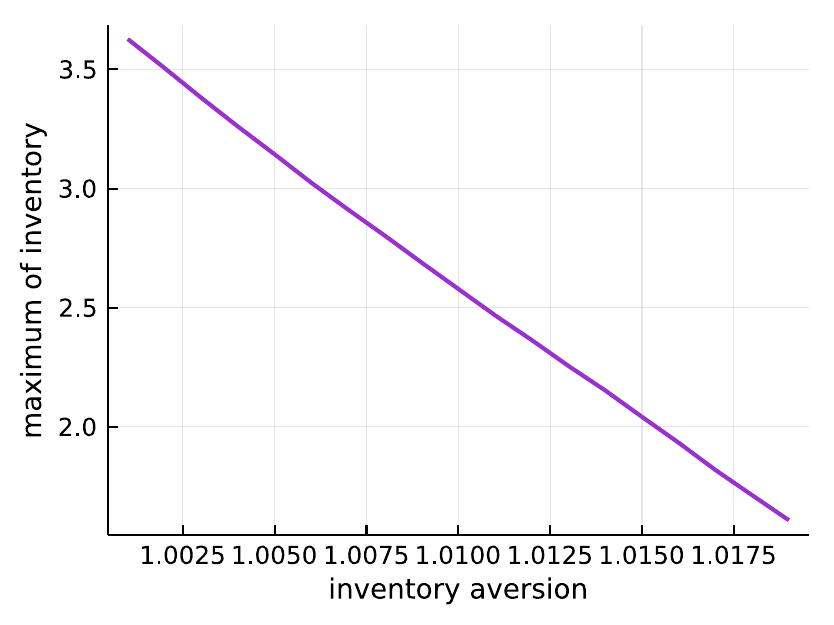}
    \caption{This plot shows the average of the maximum value of the inventory of Trader $A$ as a function of the inventory aversion parameter $\psi_A$. The model parameters are $r=0$, $\phi=1.02$, $\sigma_A=\sqrt{2}$, $\sigma_B=1$, and $p=0.8$. For a given value of $\psi_A \in [1, \phi)$ we take $\psi_B=\psi_A$. }
    \label{fig: maximum inventory}
\end{figure} 
\vspace{.5cm}
%
%
\section{Information Dominates Strategy}
\label{sec: information dominates strategy}
 \noindent  \textbf{Scenario 7}. In this scenario Trader $A$ has a fixed spread $\phi$ and a fixed ambiguity aversion parameter $\psi$ satisfying $1\leq \psi<\phi$. But Trader $B$ now employs an $\{\mcF^B_t\}_{t\geq 0}$ adapted  multiplicative spread $\{\phi_t\}_{0\leq t\leq T}$, assumed to be càdlàg, such that $\phi_t>1$ for all $t\in[0,T]$.  Trader $B$ also uses an $\{\mcF^B_t\}_{t\geq 0}$ adapted ambiguity aversion $\{\psi_t\}_{0\leq t\leq T}$, assumed to be càdlàg, satisfying $1\leq \psi_t<\phi_t$ for all $t\in[0,T]$. We define the stopping times
\begin{equation}
\tau_k^+ = \inf_{t \in [\tau_{k-1}, T]} (\phi^{-Q^A_{k-1}-1}\psi_A^{-Q^A_{k-1}}S^{A}_t \geq  \phi_t^{-Q^B_{k-1}+1}\psi_t^{-Q^B_{k-1}}S^{B}_t)\,,
\end{equation}
and
\begin{equation}
\tau_{k}^- = \inf_{t \in [\tau_{k-1}, T]} (\phi^{-Q^A_{k-1} + 1} \psi_A^{-Q^A_{k-1}}S^{A}_t \leq  \phi_t^{-Q^B_{k-1}-1}\psi_t^{-Q^B_{k-1}}S^{B}_t)\, , 
\end{equation}
where $k=1,\dots,n$, with the convention that $\tau_0=0$. Here, the inventory $Q^A_k$ of Trader $A$ after the first $k$ trades is given by 
\begin{equation}
    Q^A_k = \sum_{r=1}^k  \mathds 1(\tau_r^+ < \tau_r^-) - \mathds 1(\tau_r^- < \tau_r^+)\,,
\end{equation}
and one has $Q^B_k = - Q^A_k$. Then we obtain the following.
\begin{Proposition}\label{prop: information vs strategy}
The value of Trader A's position under Scenario 7 is strictly positive in any non-trivial trading model with inventory risk and where Trader $B$ makes use of an adapted multiplicative spread and an adapted inventory aversion. 
\end{Proposition} 
\begin{proof}
As in Proposition \ref{prop: multiple trades}, we wish to show that the expectation of Trader $A$'s profitability \eqref{eq: n trades H^A_T} is positive for any $n\in\mathbb{N}$. Given that $\{\phi_t\}_{0\leq t\leq T}$ and $\{\psi_t\}_{0\leq t\leq T}$ only enter \eqref{eq: n trades H^A_T} indirectly, via stopping times, it follows  by use of the optional stopping theorem that the value of Trader $A$'s position under Scenario 7 is strictly positive if \eqref{eq: condition prop 9 lemma 3} holds for any choice of the parameters $\phi$ and $\psi$ satisfying $1\leq \psi<\phi$ and for any sequence $(\epsilon_k)_{k = 1,2,\dots,n}$ such that $\epsilon_k\in\{1,-1\}$ for all $k\in\{1,\dots, n\}$.  Lemma \ref{Lemma 3} then applies and leads to the desired result.
\end{proof}
Proposition \ref{prop: information vs strategy} offers an interesting insight: superior information trumps strategy -- in other words, regardless of the strategy that Trader $B$ uses, the superiority of Trader $A$'s information ensures that Trader $A$'s profitability is positive. 
%
%
\section{Conclusions}
\label{sec: conclusions}
\noindent That superior information leads to superior trading need not come as a surprise, for who would have thought otherwise? But it is one thing to speculate so in general terms with waving hands, and it is another thing to embody the principle in mathematical terms underpinned with explicit models. Our examples illustrate the variety of venues under which the informed trader will practice his trade, and the professionals will realize that strategy is but of little impact,  that knowledge exceeding that of
\begin{greek}o<i pollo'i\end{greek} 
is what is required. When the traded instrument admits a single payout, as in the case of the defaultable discount bond considered, the advantage taken by the informed trader diminishes as the terminal date is approached, and at the point of reckoning traders of either class are equally knowledgeable. This is an artefact of the simple structure of the example, and not representative of the situation in general; for in reality the informed trader will have already taken his profit and moved on to the next presented opportunity. With coupon bonds and dividend paying stocks, by the time Trader $B$ has caught up with Trader $A$'s knowledge of some coupon or dividend, the intention is already on the next. We do regard it significant that the filtration accessed by Trader $B$ is a strict sub-filtration of that of Trader $A$. For we assign a value to a trader's strategy and for that we require a pricing kernel adapted to the filtration of the larger mass of the less-informed represented by Trader $B$. The change of measure thus induced allows one to work with risk-neutral pricing for all traders, whose differences in pricing are due to differences in knowledge -- not, in the main, in our scheme, to differences in  behavioral characteristics or issues of supply and demand. 

Throughout the analysis we have worked at two different levels, which we might call the general and the specific. For some considerations we look at general trading models based on a market with a pricing kernel and a hierarchy of filtrations. For other considerations we construct specific examples of market information flows. It is a general feature of the information-based approach that quantification of the magnitude of the information is left as an abstraction, and one might be left wondering what the information flow rate signifies. This issue has been addressed in references \cite{BHM2007, BHM2008}, where it is pointed out that one can back out an implied information flow rate from option prices, alongside the implied $\mathbb P$-distributions of the cash flows. Whether or not such options are actually traded is not the point; rather, we emphasize that the flow rate parameters and the {\it a priori} \,$\mathbb P$-distributions can in principle be determined from market data. We envisage a variety of practical applications of the trading scenarios that we have considered in this paper. Both in-house risk managers and market regulators may find our model useful to stress market conditions such as volatility and spread and to look at average profitability, volume traded, and other variables of interest.  
More generally, our model can be readily extended to the situation where there are more than two traders in which the traders are stratified into hierarchies. Such scenarios may form the basis for viability studies for large markets and eventually new approaches to the vexing issues of systemic risk in such markets \cite{Hurd}.
In the specialized aspects of our analysis we have confined our studies to the consideration of Brownian bridge information processes of the type set out and studied in \cite{BHM2007, BHM2008, BDFH2009, BHM2011, FHM2012, HM2012, BHM2022,  Macrina2006, Rutkowski Yu, Aydin2017, FKT2019}. This can be justified on the basis of the high degree of analytic tractability found in such models. But it should be clear that the more general aspects of our analysis extend into the categories of information processes admitting jumps, and hence it would also be worth exploring trading models based on L\'evy processes and L\'evy-Ito processes \cite{Applebaum,BHJS2021, BH2013, BHM2008dam, BHMackie2012, BHY2013, Hoyle2010, HHM2012,  HHM2015, HMM2020, HSB2020, MS2019, Menguturk 2013, Sato 1990}.

The information-based trading models that we set out here can be placed in the context of the distinct and overlapping contributions made by various authors, including those of the present paper, to the development of the trading mechanisms built on information-based asset pricing. The early work  \cite{BHM2007, BHM2008, BHM2008dam, Macrina2006, Rutkowski Yu} in this area was concerned with applications to asset pricing, derivatives risk management, and insurance markets. The story of how these collaborations came about can be found in the preface to \cite{BHM2022} where a bibliography of later work on the topic by a host of authors can be found.  The first applications to trading mechanisms appear in reference  \cite{BDFH2009}. In that paper, the trading involves (i) a large homogeneous market with access to a single flow of information, together with (ii) a single ``informed" trader, who has access to an additional flow of information. Since the informed trader also has access to the information flow available to the general market, his filtration is strictly larger than that of the market as a whole. 

In the language of the present paper, the market as a whole in \cite{BDFH2009} can be represented by a tier-1 trader, whereas the informed trader is a tier-2 trader. In reference  \cite{BDFH2009} one also finds a version of the ``Pythagorian" formula as well as elements of the statistical arbitrage argument that we use in the present paper. In reference \cite{BHM2011} one sees the first applications of the information-based  approach to trading in a truly heterogeneous market. In that work each trader has access to a distinct information process and it is assumed for simplicity that the Brownian bridges are independent. The traders make prices with bid-ask spreads and trades occur when the spreads cross. Some of the traders have informational superiority in the sense that their information flow rates are higher than those of other traders, but the relationship of the traders to one another is not ``hierarchical"  in the way that it is in  the present paper. The model of \cite{BHM2011} was extended and analyzed in great depth in the work of Ayd{\i}n \cite{Aydin2017} and has been further extended (with the inclusion of noise correlations and extensive numerical analysis) by Fukuda, Kondo \& Takada \cite{FKT2019}. All in all, one sees two distinct lines of development in the construction and analysis of trading models making use of the information-based framework. On the one hand, there are the heterogeneous models of  \cite{BHM2011, Aydin2017, FKT2019}, and on the other hand there are the hierarchical models of \cite{BDFH2009} and the present paper. These two classes of models apply to different types of markets and each can be generalized in various ways. 
An open problem in the theory of heterogeneous markets, in the situation where each trader has his own supply of information, is to show rigorously (or else provide a counterexample) that informational superiority -- that is, having a higher information flow rate -- is advantageous in the sense that it leads to a statistical arbitrage.  

\begin{acknowledgments}
\noindent
We are grateful for comments by participants at the 16th Research in Options conference (RIO 2021), where this work was presented in November 2021, co-hosted by the Department of Mathematics, Khalifa University, UAE, by the School of Applied Mathematics, Funda\c c\~ao Getulio Vargas, Rio de Janeiro (FGV EMAp), by Universidade Federal Fluminense, Brazil (UFF), and by Universidade Federal de Santa Catarina, Brazil (UFSC).  We are grateful to seminar participants in the Department of Mathematics and Statistics at McMaster University, Ontario, November 2022, for their input. The authors wish to thank A.~Macrina, and the anonymous referees, for helpful comments on earlier drafts. This work was carried out in part while LBS was based at Imperial College London and at King's College London. 
\end{acknowledgments}

\vspace{1.5cm}

\noindent {\bf References}
\begin{enumerate}
\vspace{0.3cm}

\bibitem{Applebaum} 
Applebaum,~D.~(2009) {\em L\'evy Processes and
Stochastic Calculus}, second edition. Cambridge University Press.

 \bibitem{Aydin2017}
Ayd{\i}n,~N.~S.~(2017) {\em Financial Modelling with Forward-Looking Information}. Cham, Switzerland: Springer.

\bibitem{BHJS2021} 
Bouzianis,~G.,~Hughston,~L.~P.,~Jaimungal,~S.~\& S\'anchez-Betancourt,~L.~(2020) L\'evy-Ito Models in Finance. {\em Probability Surveys} \textbf{18},  132-178.

\bibitem{BDFH2009} 
Brody,~D.~C., Davis,~M.~H.~A., Friedman,~R.~L.~\& Hughston,~L.~P.~(2009) Informed Traders. {\em Proceedings of the Royal Society} A \textbf{465}, 1103-1122.

\bibitem{BH2013} 
Brody,~D.~C.~\&~Hughston,~L.~P.~(2013) L\'evy Information and the Aggregation of Risk Aversion.
\emph{Proceedings of the Royal Society} A {\bf{469}},  20130024:1-19.

\bibitem{BHM2007} 
Brody,~D.~C., Hughston,~L.~P.~\& Macrina,~A.~(2007) Beyond
Hazard Rates: a New Framework for Credit-Risk Modelling. In {\it
Advances in Mathematical Finance}
(M.~C.~Fu, R.~A.~Jarrow, J.-Y.~J. Yen \& R.~J.~Elliot, eds.)  Basel: Birkh\"auser.

\bibitem{BHM2008} 
Brody,~D.~C.,~Hughston,~L.~P.~\& Macrina,~A.~(2008a)  Information-Based Asset Pricing.
\emph{International Journal of Theoretical and Applied Finance} {\bf{11}} (1), \penalty0 107-142{\natexlab{b}}.

\bibitem{BHM2008dam} Brody,~D.~C.,~Hughston,~L.~P.~\& Macrina,~A.~(2008b)
Dam Rain and Cumulative Gain.~{\em Proceedings of the Royal Society} A {\bf 464},
1801-1822.

\bibitem{BHMackie2012} Brody,~D.~C.,~Hughston,~L.~P.~\& Mackie,~E.~(2012)
General Theory of Geometric L\'evy Models for Dynamic Asset Pricing.~{\em Proceedings of the Royal Society} A {\bf 468},
1778-1798.

\bibitem{BHM2011} 
Brody,~D.~C.,~Hughston,~L.~P.~\& Macrina,~A.~(2011) Modelling 
Information Flows in Financial Markets. In {\em Advanced Mathematical 
Methods for Finance} (G. Di~Nunno \& B. {\O}ksendal, eds.) Berlin: Springer-Verlag.

\bibitem{BHM2022} 
Brody,~D.~C.,~Hughston,~L.~P.~\& Macrina,~A., eds.~(2022) {\em Financial Informatics: an Information-Based Approach to Asset Pricing}. Singapore: World Scientific Publishing Company.

\bibitem{BHY2013} 
Brody,~D.~C.,~Hughston,~L.~P.~\& Yang,~X.~(2013)  Signal Processing with L\'evy Information.
\emph{Proceedings of the Royal Society} A {\bf{469}},  20120433:1-23.

\bibitem{Cohen2015} Cohen,~S.~N.~\& Elliot, R.~J.~(2015) {\em
Stochastic Calculus and Applications}, second edition. New York:  Birkh\"auser.

\bibitem{Dothan1990} Dothan,~M.~U.~(1990) {\em
Prices in Financial Markets}. Oxford, New York: Oxford University Press.

\bibitem{FHM2012} 
Filipovi\'c,~D.,~Hughston,~L.~P.~\& Macrina,~A.~(2012)  Conditional Density Models for Asset Pricing.
\emph{International Journal of Theoretical and Applied Finance} {\bf{15}} (1), 1250002:1-24.

\bibitem{FKT2019} 
Fukuda,~K.,~Kondo,~K.~\& Tadaka,~H.~(2019)  Price Dynamics under the Information-Based Dealer Model.
\emph{Journal of Mathematical Finance} {\bf{9}}, 726-746.

\bibitem{Hoyle2010} Hoyle, E.~(2010) \textit{Information-Based Models for Finance and Insurance.} {PhD Thesis, Imperial College London}.

\bibitem{HHM2012} 
 Hoyle,~E.,~Hughston,~L.~P.~\& Macrina,~A.~(2011) 
 L\'evy Random Bridges
and the Modelling of Financial Information. {\em Stochastic Processes and
their Applications} \textbf{121}, 856-884.

\bibitem{HHM2015} 
 Hoyle,~E.,~Hughston,~L.~P.~\& Macrina,~A.~(2015) Stable-1/2 Bridges and Insurance. In {\em Advances in Mathematics of Finance} (A.~Palczewski \& L.~Stettner, eds.)  Banach Center Publications \textbf{104}, 95-120. Warsaw:  Polish Academy of Sciences.
 
 \bibitem{HMM2020} 
 Hoyle,~E., Macrina,~A.~\& Meng\"ut\"urk, L.~A.~(2020) Modulated Information Flows in Financial Markets. {\em International Journal of Theoretical and Applied Finance} \textbf{23} (4), 2050026. 
 
 \bibitem{HM2012} 
Hughston,~L.~P.~\& Macrina,~A.~(2012) Pricing Fixed Income Securities in an Information Based Framework. {\em Applied Mathematical Finance} \textbf{19}  (4), 361-379.

\bibitem{HSB2020} 
Hughston,~L.~P.~\& S\'anchez-Betancourt,~L.~(2020) Pricing with Variance-Gamma Information. {\em Risks} \textbf{8}  (4), 105.

\bibitem{Hurd} Hurd,~T.~R.~(2016) {\em
Contagion}\,! {\em Systemic Risk in Financial Networks}. Springer Briefs in Quantitative Finance.

\bibitem{Macrina2006} Macrina, A.~(2006) \textit{An Information-Based Framework for Asset Pricing: X-factor Theory and its Applications}. {PhD Thesis, King's College London}.

\bibitem{MS2019}
Macrina,~A. \& Sekine,~J.~(2019) Stochastic Modelling with Randomized Markov Bridges. {\em Stochastics} \textbf{19}, 1-27.

\bibitem{Menguturk 2013} Meng\"ut\"urk, L.~A.~(2013) \textit{Information-Based Jumps, Asymmetry and Dependence in Financial Modelling.} {PhD Thesis, Imperial College London}.

\bibitem{Protter2004} 
Protter,~P.~E.~(2004) {\em Stochastic Integration and Differential Equations}, second edition. Berlin, Heidelberg, New York: Springer-Verlag.

\bibitem{Rutkowski Yu} 
Rutkowski,~R. \& Yu,~N.~(2007)  An Extension of the Brody-Hughston-Macrina Approach to Modeling of Defaultable Bonds.~{\em International Journal of Theoretical and Applied Finance} {\bf{10}} (3), \penalty0 557-589{\natexlab{b}}.

\bibitem{Sato 1990}
Sato,~K.~(1999) {\em L\'evy Processes and Infinitely Divisible Distributions}. Cambridge University Press.

\bibitem{williams1991} Williams,~D.~(1991) {\em
Probability with Martingales}. Cambridge University Press.

\end{enumerate}

%
%
%
%
%
%
%
%
%
%
%
%



\end{document}